\theoremstyle{remark}
\newtheorem{theorem}{\indent \emph{\textbf{Theorem}}}
\newtheorem{corollary}{\indent \textbf{\emph{Corollary}}}
\begin{document}

\makeatletter
\newcommand{\ud}{\mathrm{d}}
\newcommand{\rmnum}[1]{\romannumeral #1}
\newcommand{\Rmnum}[1]{\expandafter\@slowromancap\romannumeral #1@}
\newcommand{\udots}{\mathinner{\mskip1mu\raise1pt\vbox{\kern7pt\hbox{.}}
        \mskip2mu\raise4pt\hbox{.}\mskip2mu\raise7pt\hbox{.}\mskip1mu}}
\makeatother

\preprint{APS/123-QED}

\title{An improved quantum algorithm for A-optimal projection}
\author{Shi-Jie Pan}
\affiliation{State Key Laboratory of Networking and Switching Technology, Beijing University of Posts and Telecommunications, Beijing, 100876, China}
\affiliation{State Key Laboratory of Cryptology, P.O. Box 5159, Beijing, 100878, China}
\affiliation{Center for Quantum Computing, Peng Cheng Laboratory, Shenzhen 518055, China}
\author{Lin-Chun Wan}
\affiliation{State Key Laboratory of Networking and Switching Technology, Beijing University of Posts and Telecommunications, Beijing, 100876, China}
\author{Hai-Ling Liu}
\affiliation{State Key Laboratory of Networking and Switching Technology, Beijing University of Posts and Telecommunications, Beijing, 100876, China}
\author{Qing-Le Wang}
\affiliation{School of Control and Computer Engineering, North China Electric Power University, Beijing, 102206, China}
\affiliation{CAS Key Laboratory of Quantum Information, University of Science and Technology of China, Hefei 230026, China}
\author{Su-Juan Qin}
\email{qsujuan@bupt.edu.cn}
\affiliation{State Key Laboratory of Networking and Switching Technology, Beijing University of Posts and Telecommunications, Beijing, 100876, China}
\author{Qiao-Yan Wen}
\email{wqy@bupt.edu.cn}
\affiliation{State Key Laboratory of Networking and Switching Technology, Beijing University of Posts and Telecommunications, Beijing, 100876, China}
\author{Fei Gao}
\email{gaof@bupt.edu.cn}
\affiliation{State Key Laboratory of Networking and Switching Technology, Beijing University of Posts and Telecommunications, Beijing, 100876, China}
\affiliation{Center for Quantum Computing, Peng Cheng Laboratory, Shenzhen 518055, China}

\date{\today}

\begin{abstract}
Dimensionality reduction (DR) algorithms, which reduce the dimensionality of a given data set while preserving the information of the original data set as well as possible, play an important role in machine learning and data mining. Duan \emph{et al}. proposed a quantum version of the A-optimal projection algorithm (AOP) for dimensionality reduction [Phys. Rev. A 99, 032311 (2019)] and claimed that the algorithm has exponential speedups on the dimensionality of the original feature space $n$ and the dimensionality of the reduced feature space $k$ over the classical algorithm. In this paper, we correct the time complexity of Duan \emph{et al}.'s algorithm to $O(\frac{\kappa^{4s}\sqrt{k^s}} {\epsilon^{s}}\mathrm{polylog}^s (\frac{mn}{\epsilon}))$, where $\kappa$ is the condition number of a matrix that related to the original data set, $s$ is the number of iterations, $m$ is the number of data points and $\epsilon$ is the desired precision of the output state. Since the time complexity has an exponential dependence on $s$, the quantum algorithm can only be beneficial for high dimensional problems with a small number of iterations $s$. To get a further speedup, we propose an improved quantum AOP algorithm with time complexity $O(\frac{s \kappa^6 \sqrt{k}}{\epsilon}\mathrm{polylog}(\frac{nm}{\epsilon}) + \frac{s^2 \kappa^4}{\epsilon}\mathrm{polylog}(\frac{\kappa k}{\epsilon}))$ and space complexity $O(\log_2(nk/\epsilon)+s)$. With space complexity slightly worse, our algorithm achieves at least a polynomial speedup compared to Duan \emph{et al}.'s algorithm. Also, our algorithm shows exponential speedups in $n$ and $m$ compared with the classical algorithm when both $\kappa$, $k$ and $1/\epsilon$ are $O(\mathrm{polylog}(nm))$.
\end{abstract}

\pacs{Valid PACS appear here}
\maketitle

\section{Introduction}
Quantum computing is more computationally powerful than classical computing in solving specific problems, such as the factoring problem \cite{PS1994}, unstructured data search problem \cite{GL1996} and matrix computation problems \cite{HHL, Wan2018}. In recent years, quantum machine learning (QML) has received wide attention as an emerging research area that successfully combines quantum physics and machine learning. An important part of the study of QML focuses on designing quantum algorithms to speed up the machine learning problems, such as data classification \cite{LMR, RML, CD, SFP, DYLL}, linear regression \cite{WBL, SSP, W, YGW, YGLHRW}, association rules mining \cite{YGWW} and anomaly detection \cite{LR}.

In the big data era, most of the real-world data are high-dimensional, which requires high computational performance and usually causes a problem called \emph{curse of dimensionality} \cite{Bpattern}. Since the high-dimensional real-world data are often confined to a region of the space having lower effective dimensionality \cite{Bpattern}, a technique called dimensionality reduction (DR) which reduces the dimensionality of the given data set while preserving the information of the original data set as well as possible was proposed. The DR algorithm often serves as a preprocessing step in data mining and machine learning.

Based on the feature space that the data lie on and the learning task that we want to handle, various DR algorithms have been developed. Generally, when the data lie on a linear embedded manifold, principal component analysis (PCA) \cite{HHPCA}, a DR algorithm maintaining the characteristics of the data set that contribute the most to the variance, is guaranteed to uncover the intrinsic dimensionality of the manifold. When the data lie on a non-linearly embedded manifold, the manifold learning techniques, such as Isomap \cite{TSL2000}, Locally Linear Embedding \cite{Roweis2323}, and Laplacian Eigenmap \cite{BP2002} can be used to discover the nonlinear structure of the manifold.Since the DR algorithms mentioned above aim to discover the geometrical or cluster structure of the training data, these algorithms are not directly related to the classification and regression tasks which are the two most important tasks in machine learning and data mining. For the classification task, a famous DR technique called linear discriminant analysis (LDA) was put forward, which maximizes the ratio of the
between-class variance and the within-class variance of the training data [22] \cite{Fisher1936}. For the regression task, He \emph{et al}. proposed a novel DR algorithm called A-Optimal Projection (AOP) that aims to minimize the prediction error of a regression model while reducing the dimensionality \cite{HZCL}. Their algorithm improves the regression performance in the reduced space.

In the context of quantum computing, the quantum PCA was proposed by Lloyd \emph{et al}. to reveal in quantum form
the eigenvectors corresponding to the large eigenvalues of an unknown low-rank density matrix \cite{Lloyd_2014}. Later, Yu \emph{et al}. proposed a quantum algorithm that compresses training data based on PCA \cite{YGLW}. When the dimensionality of the reduced space is polylogarithmic in the training data, their quantum algorithm achieves an exponential speedup compared with the classical algorithm. Cong \emph{et al}. implemented a quantum LDA algorithm which has an exponential speedup in the scales of the original data set compared with the classical algorithm \cite{CD}. In \cite{DYXL}, Duan \emph{et al}. studied the AOP algorithm and proposed its quantum counterpart, called Duan-Yuan-Xu-Li (DYXL) algorithm. The DYXL algorithm is iterative and was expected to have a time complexity $O(s\mathrm{polylog}(nk/\epsilon)$, where $s$ is the number of iterations, $n$ is the dimensionality of the original feature space, $k$ is the dimensionality of the reduced feature space and $\epsilon$ is the desired precision of the output state.

In this paper, we reanalyze the DYXL algorithm and correct the time complexity to $O(\frac{\kappa^{4s}\sqrt{k^s}} {\epsilon^{s}}\mathrm{polylog}^s (\frac{mn}{\epsilon}))$, where $\kappa$ is the condition number of a matrix that related to the original data set, $m$ is the number of data points. We find that in the DYXL algorithm, multiple copies of the current candidate are consumed to improve the candidate by quantum phase estimation and post-selection in each iteration, which results in the total time complexity having exponential dependence on the number of iterations $s$. Thus the DYXL algorithm can only be beneficial for high dimensional problems with a small $s$, which limits the practical application of the algorithm.
To get a further speedup and reduce the dependence on $s$, we propose an improved quantum AOP algorithm with time complexity $O(\frac{s \kappa^6 \sqrt{k}}{\epsilon}\mathrm{polylog}(\frac{nm}{\epsilon}) + \frac{s^2 \kappa^4}{\epsilon}\mathrm{polylog}(\frac{\kappa k}{\epsilon}))$. Note that in the DYXL algorithm, one only changes the amplitude of the candidate in each iteration. In our algorithm, we process the amplitude information of the candidate in computational basis to reduce the consumption of the copies of the current candidate. Our algorithm has a quadratic dependence rather than an exponential dependence on $s$ in time complexity, which achieves a significant speedup over the DYXL algorithm with the space complexity slightly worse. Also, it shows exponential speedups over the classical algorithm on $n$ and $m$, when both $\kappa$, $k$ and $1/\epsilon$ are $O(\mathrm{polylog}(nm))$.

The rest of this paper is organized as follows. In Sec. \Rmnum{2}, we review the classical AOP algorithm in Sec. \Rmnum{2} A and its quantum version in Sec. \Rmnum{2} B, and analyze the complexity of the DYXL algorithm in Sec. \Rmnum{2} C. We then propose our quantum AOP algorithm and analyze the complexity in Sec. \Rmnum{3}. In Sec. \Rmnum{4}, we discuss the number of iterations of the two quantum algorithms. The conclusion is given in Sec. \Rmnum{5}.

\section{Review of the classical and quantum AOP algorithm}
\label{sec:2}
In this section, we will briefly review the AOP algorithm in Sec. \Rmnum{2} A. The DYXL algorithm will be introduced in Sec. \Rmnum{2} B, and we will analyze its complexity in Sec. \Rmnum{2} C.

\subsection{Review of AOP algorithm}
Suppose $X=(\textbf{x}_1, \textbf{x}_2,..., \textbf{x}_m)$ is a data matrix with dimension $n \times m$, where $n$ is the number of the features and $m$ is the number of data points. The objective of the AOP is to find the optimal projection matrix $A \in n \times k$ which minimizes the trace of the covariance matrix of regression parameters to reduce the dimensionality of $X$.

In He \emph{et al}. \cite{HZCL}, a graph regularized regression model was chosen and thus the optimal projection matrix $A$ can be obtained by solving the following objective function:
    \begin{eqnarray}\label{eq:equ1}
            \mathop{\min}_A \mathop{Tr} \left[ \left( A^T X \left( I+ \lambda_1 L \right)X^T A + \lambda_2 I \right)^{-1} \right],
    \end{eqnarray}
where $\lambda_1$ and $\lambda_2$ are the regularized coefficients, $L=\mathrm{diag}\left(S\bf{1}\right)-S$ is  \emph{graph Laplacian} where $S$ is the weight matrix of the data points and $\bf{1}$ is a vector of all ones. Let $\mathcal{N}_k(\mathbf{x})$ denotes the $k$ nearest neighbors of $\mathbf{x}$, a simple definition of $S$ is as follows:
    \begin{eqnarray}\label{eq:equ2}
         S_{i,j} =
             \begin{cases}
                1, & \mbox{if }{{\bf{x}}_i}\in \mathcal{N}_k\left({{\bf{x}}_j}\right)\mbox{ }\mathrm{or} \mbox{  } {{\bf{x}}_j} \in \mathcal{N}_k\left({\bf{x}}_i\right); \\
                0, & \mbox{otherwise}.
             \end{cases}
    \end{eqnarray}
 To solve the optimization problem (\ref{eq:equ1}), He \emph{et al}. introduced a variable $B$ and proposed the following theorem \cite{HZCL}:
\begin{theorem} (\emph{Theorem} \emph{4.3} in \cite{HZCL})
The optimization problem (\ref{eq:equ1}) is equivalent to the following optimization problem:
    \begin{eqnarray}\label{eq:equ3}
        \mathop{\min}_{A,B}{\left\| {I-{A^T} {\widetilde{X}} {B}} \right\|}^2+\lambda{\left\|{B}\right\|}^2,
    \end{eqnarray}
where $\widetilde{X}=X\Sigma$ and $\Sigma$ is defined by the equation $I+\lambda_1 L=\Sigma\Sigma^T$.
\end{theorem}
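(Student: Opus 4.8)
The plan is to eliminate the auxiliary variable $B$ from (\ref{eq:equ3}) by minimizing over it in closed form, and then show that what remains is exactly (\ref{eq:equ1}) up to a harmless positive scalar. First I would insert the factorization $I+\lambda_1 L=\Sigma\Sigma^T$ together with $\widetilde{X}=X\Sigma$ into (\ref{eq:equ1}), so that $A^T X(I+\lambda_1 L)X^T A+\lambda_2 I = A^T\widetilde{X}\widetilde{X}^T A+\lambda_2 I$. Writing $C:=A^T\widetilde{X}$ (a $k\times m$ matrix, which forces $B$ to be $m\times k$), the cost in (\ref{eq:equ1}) becomes $\mathrm{Tr}[(CC^T+\lambda_2 I)^{-1}]$, while the cost in (\ref{eq:equ3}) for a fixed $A$ is $g(B)=\|I-CB\|^2+\lambda\|B\|^2$ with Frobenius norms.

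Next I would minimize $g$ over $B$. Since $g$ is a strictly convex quadratic in the entries of $B$ when $\lambda>0$, its unique minimizer solves the stationarity equation $-2C^T(I-CB)+2\lambda B=0$, that is, $(C^TC+\lambda I)B=C^T$, giving $B^\star=(C^TC+\lambda I)^{-1}C^T$. Applying the push-through identity $(C^TC+\lambda I)^{-1}C^T=C^T(CC^T+\lambda I)^{-1}$ then yields the convenient form $I-CB^\star=\lambda(CC^T+\lambda I)^{-1}$.

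Then I would substitute $B^\star$ back into $g$ and simplify using linearity and cyclicity of the trace together with $\|M\|^2=\mathrm{Tr}(M^TM)$. One gets $\|I-CB^\star\|^2=\lambda^2\,\mathrm{Tr}[(CC^T+\lambda I)^{-2}]$ and $\|B^\star\|^2=\mathrm{Tr}[CC^T(CC^T+\lambda I)^{-2}]$, hence
\[
\min_B g(B)=\mathrm{Tr}\big[(\lambda^2 I+\lambda CC^T)(CC^T+\lambda I)^{-2}\big]=\lambda\,\mathrm{Tr}\big[(CC^T+\lambda I)^{-1}\big].
\]
Because $CC^T=A^T\widetilde{X}\widetilde{X}^T A=A^T X(I+\lambda_1 L)X^T A$, minimizing (\ref{eq:equ3}) jointly over $A$ and $B$ equals $\lambda\,\min_A \mathrm{Tr}[(A^T X(I+\lambda_1 L)X^T A+\lambda I)^{-1}]$. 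With the identification $\lambda=\lambda_2$ this is $\lambda_2$ times the objective of (\ref{eq:equ1}); as a positive constant factor does not change the set of minimizers, the two problems share the same optimal $A$, which is the claimed equivalence.

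The part requiring the most care is not conceptual but bookkeeping: keeping the shapes straight ($B\in m\times k$, $C\in k\times m$, $CC^T\in k\times k$), manipulating the Frobenius inner products correctly, and invoking the push-through identity in the right direction so that the two trace terms telescope into $\lambda\,\mathrm{Tr}[(CC^T+\lambda I)^{-1}]$. I would also emphasize that ``equivalent'' here means the two programs have the same minimizing $A$ (their optimal values differ by the factor $\lambda$), and that the assumption $\lambda>0$ is precisely what makes the inner minimization over $B$ well posed with a unique solution.
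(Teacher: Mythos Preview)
Your argument is correct. The minimization over $B$ for fixed $A$ is a strictly convex quadratic (since $\lambda>0$), the stationarity condition and the push-through identity are applied properly, and the telescoping of the two trace terms into $\lambda\,\mathrm{Tr}[(CC^T+\lambda I)^{-1}]$ is clean. Your caveat that ``equivalent'' means the two problems share the same minimizing $A$ while the optimal values differ by the factor $\lambda=\lambda_2$ is exactly right.

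As for comparison with the paper: there is nothing to compare against. The paper does not prove this theorem; it merely quotes it as Theorem~4.3 of \cite{HZCL} and proceeds to use the resulting alternating update rules (\ref{eq:equ4})--(\ref{eq:equ5}). Your derivation in fact supplies the missing link between (\ref{eq:equ1}) and (\ref{eq:equ3}), and the optimal $B^\star=(C^TC+\lambda I)^{-1}C^T=(\widetilde{X}^TAA^T\widetilde{X}+\lambda_2 I)^{-1}\widetilde{X}^TA$ you obtain is exactly the update (\ref{eq:equ4}) that the paper uses in step~2 of the iterative procedure.
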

Then we can use the iterative method to find the optimal $A$. The procedure of computing the projection matrix $A$ can be summarized as follows:
\begin{enumerate}
    \item Initialize the matrix $A$ by computing the PCA of the data matrix $X$;
    \item Computing matrix $B$ according to equation (\ref{eq:equ4}):
    \begin{eqnarray}\label{eq:equ4}
        B=\left({\widetilde{X}^T A A^T \widetilde{X}+\lambda_2 I}\right)^{-1} \widetilde{X}^T A.
    \end{eqnarray}
    \item Computing matrix $A$ according to equation (\ref{eq:equ5}):
    \begin{eqnarray}\label{eq:equ5}
        A=\left({\widetilde{X} B B^T \widetilde{X}^T}\right)^{-1} \widetilde{X} B.
    \end{eqnarray}
    Normalize $A$ to satisfy $\left\|{A}\right\|_F\le\rho$ ($\rho$ is a constant and here we set it to 1).
    \item Repeat steps 2 and 3 until convergence.
\end{enumerate}

Since the AOP algorithm involves matrix multiplication and inversion, the time complexity of the classical algorithm is $\Omega(s\mathrm{poly}(nm))$, where $s$ is the number of iterations.

\subsection{Review of the DYXL algorithm}
\label{sub:2b}
In \cite{DYXL}, the authors reformulated the iterative method of AOP to make the algorithm suitable for quantum settings. They adjusted the initialization of matrix $A$ and combined the steps 2 and 3 into one
step to remove the variable B.
Suppose the singular value decomposition of the matrix $\widetilde{X}$ is $\widetilde{X}=\sum_{j=0}^{r-1} \sigma_j |\textbf{u}_j\rangle\langle \textbf{v}_j|$, where $r =O(\mathrm{polylog}(m,n))$ is the rank of $\widetilde{X}$, $|\textbf{u}_j\rangle$ and $|\textbf{v}_j\rangle$ are the left and right singular vectors with corresponding singular value $\sigma_j (1 = \sigma_0 \ge \sigma_1\ge ... \ge \sigma_{r-1} > 0)$. The reformulated AOP algorithm can be summarized as follows:
\begin{enumerate}
    \item Initialize the matrix $A^{(0)}$ by computing the PCA of the data matrix $\widetilde{X}$,
    \begin{eqnarray}\label{eq:equ6}
        A^{(0)}=\mathrm{PCA}{(\widetilde{X})}=\sum_{j=0}^{k-1} |\textbf{u}_j\rangle\langle \textbf{j}|,
    \end{eqnarray}
where $A^{(i)}$ is the matrix $A$ of iteration $i$, $k$ is the rank of $A^{(0)}$ and $|\textbf{j}\rangle$ is the computational basis state.
    \item Update the matrix $A$ according to the following equation
    \begin{eqnarray}\label{eq:equ7}
        A^{(i)}=\sum_{j=0}^{k-1} \beta_j^{(i)} |\textbf{u}_j\rangle\langle \textbf{j}|=\sum_{j=0}^{k-1} \frac{(\sigma_j \beta_j^{(i-1)})^2+\lambda_2}{c^{(i)}\sigma_j^{2} \beta_j^{(i-1)}} |\textbf{u}_j\rangle\langle \textbf{j}|,
    \end{eqnarray}
    where $\beta_j^{(i)}$ is the singular value of $A^{(i)}$ with corresponding left and right singular vectors $|\textbf{u}_j\rangle$ and $|\textbf{j}\rangle$, $c^{(i)}$ is a constant to ensure that $\left\|{A^{(i)}}\right\|_F\le 1$.

    \item Repeat step 2 until convergence.
\end{enumerate}

The DYXL algorithm can be summarized as follows:
\begin{enumerate}
    \item Initialize $i=0$, and prepare the state $|\psi_{A^{(0)}}\rangle$, where
    \begin{eqnarray}\label{eq:equ8}
        |\psi_{A^{(0)}}\rangle=\frac{1}{\sqrt{k}}\sum_{j=0}^{k-1}|\textbf{u}_j\rangle |\textbf{j}\rangle.
    \end{eqnarray}

    \item Suppose the quantum state $|\psi_{A^{(i-1)}}\rangle$ is given, prepare the following state:
    \begin{eqnarray}\label{eq:equ9}
    \begin{split}
        &|\psi_{0}^{(i-1)}\rangle =|0\rangle^{D} (|0\rangle ... |0\rangle)^C(|0\rangle ... |0\rangle)^B |\psi_{A^{(i-1)}}\rangle^A \\
        &=|0\rangle^{D}\sum_{j=0}^{k-1} \beta_j^{(i-1)} (|0\rangle ... |0\rangle)^C(|0\rangle ... |0\rangle)^B (|\textbf{u}_j\rangle|\textbf{j}\rangle)^A,
    \end{split}
    \end{eqnarray}
    where the superscripts $D$, $C$, $B$, $A$ represent the register $D$, $C$, $B$, and $A$, respectively.

    \item Perform phase estimation on the $|\psi_{0}^{(i-1)}\rangle$ for the unitary $e^{i\widetilde{X}\widetilde{X}^\dagger t_0}$ and $e^{iA^{(i-1)}{{A^{(i-1)}}^{\dagger} t_0}}$,
    \begin{eqnarray}\label{eq:equ10}
    \begin{split}
        &|\psi_{1}^{(i-1)}\rangle=  \\
        &|0\rangle^{D} \!\sum_{j=0}^{k-1} \beta_j^{(i-1)} |\sigma_j^2\rangle^C |(\beta_j^{(i-1)})^2\rangle^B\!(|\textbf{u}_j\rangle|\textbf{j}\rangle)^A.
    \end{split}
    \end{eqnarray}

     \item Perform an appropriate controlled rotation
     on the register $B$, $C$ and $D$, transforms the system to:
     \begin{eqnarray}\label{eq:equ11}
     \begin{split}
        |\psi_{2}^{(i-1)}\rangle=&\sum_{j=0}^{k-1} \beta_j^{(i-1)} |\sigma_j^2\rangle^C |(\beta_j^{(i-1)})^2\rangle^B(|\textbf{u}_j\rangle|\textbf{j}\rangle)^A \\
        &(\sqrt{1-\rho^2 f(\sigma_j,\beta_j^{(i-1)})^2}|0\rangle+\rho f(\sigma_j,\beta_j^{(i-1)})|1\rangle)^{D},
     \end{split}
     \end{eqnarray}
     where $\rho$ is a constant to ensure $|\rho f(\sigma_j,\beta_j^{(i-1)})| \le 1,$ $ f(\sigma_j,\beta_j^{(i-1)})=\frac{(\sigma_j \beta_j^{(i-1)})^2+\lambda_2}{ (\sigma_j {\beta_j^{(i-1)}})^2}$.
     \item Measure the register $D$, then uncompute the register $C$, $B$ and $A$, and remove the register $C$, $B$. Conditioned on seeing 1 in $D$, we have the state
    \begin{eqnarray}\label{eq:equ12}
      \begin{split}
        |\psi_{3}^{(i-1)}\rangle &=\frac{1}{\sqrt{N^{(i)}}}\sum_{j=0}^{k-1} \frac{(\sigma_j \beta_j^{(i-1)})^2+\lambda_2}{\sigma_j^{2} \beta_j^{(i-1)}} |\textbf{u}_j\rangle |\textbf{j}\rangle\\
        &=\sum_{j=0}^{k-1} \beta_j^{(i)} |\textbf{u}_j\rangle|\textbf{j}\rangle
        =|\psi_{A^{(i)}}\rangle,
      \end{split}
    \end{eqnarray}
    where $N^{(i)}=\sum_{j=0}^{k-1} (\frac{(\sigma_j \beta_j^{(i-1)})^2+\lambda_2}{\sigma_j^{2} \beta_j^{(i-1)}})^2$. Thus $\beta_j^{(i)} \in [0,1]$ for $j \in \{0,1,2,...,k-1\}$ and $i \ge 0$.
    \item For $i=1$ to $s-1$, repeat step 2 to 5.
\end{enumerate}

\subsection{Complexity analysis of the DYXL algorithm}

In \cite{DYXL}, the authors analyzed the time complexity of each iteration (step 2 to step 5 in this paper) and claimed that the total time complexity is the product of the number of iterations and the time complexity of each iteration. Actually, in the $i$th iteration, the algorithm has to prepare the state $|\psi_{A^{(i-1)}}\rangle$ several times to perform phase estimation in step 3 and do measurements to obtain an appropriate state in step 5, which means that the total time complexity is exponential on the number of iterations $s$. The time complexity of each step can be seen in TABLE \ref{tab:stc} and the proof details can be seen in appendix \ref{app:complexity}.

\begin{table}[!htb]
\caption{\label{tab:stc}
The time complexity of each step of the DYXL algorithm.}
\begin{ruledtabular}
\begin{tabular}{cc}
Steps &Time complexity \\ \hline
Step 1 &$O(\log_2 (\epsilon^{-1}) \log_2(nk))$ \\
Step 3 &$O((G^{(i-1)}/\epsilon_1) \log_2 (1/\epsilon_1))+$ \\
       &$O\left((1/\epsilon_1) \mathrm {polylog}(nm/\epsilon_1)\right)$ \\
Step 4 & $O(\mathrm{polylog} (1/\epsilon))$ \\
Step 5 & $O(\kappa^2)$ repetitions \\
\end{tabular}
\end{ruledtabular}
Here the step 3-5 is the steps of the $i$th iteration and we neglect the runtime of step 2. $G^{(i-1)}$ is the time complexity to prepare state $|\psi_{A^{(i-1)}}\rangle$, $\kappa$ is the condition number of $\widetilde{X}$, $\epsilon$ is the desired precision of the output state, $\epsilon_1= O(\frac{\epsilon}{\kappa^2\sqrt{k}})$.
\end{table}

Putting all together, the runtime of the $i$th iteration (i.e., preparing the state $|\psi_{A^{(i)}}\rangle$) is
\begin{eqnarray*}
\begin{split}
G^{(i)}\!
&= \! O\left( \!(\frac{G^{(i-1)}}{\epsilon_1} \log_2 (\frac{1}{\epsilon_1})+ \frac{1}{\epsilon_1} \mathrm {polylog}\frac{mn}{\epsilon_1}+ \mathrm{polylog} (\frac{1}{\epsilon}))\kappa^2\!\right) \\
&= O\left(\frac {\kappa^4\sqrt{k} G^{(i-1)}}{\epsilon} \mathrm{polylog}(\frac{mn} {\epsilon}) \right)=O(TG^{(i-1)}),
\end{split}
\end{eqnarray*}
where $T=(\kappa^4\sqrt{k}/\epsilon)\mathrm{polylog}(mn/ \epsilon)$. Since $G^{(0)}=O(\log_2 (\epsilon^{-1}) \log_2 (nk))$, the overall time complexity of the algorithm is
\begin{eqnarray}
\begin{split}
G^{(s)}
& = O(TG^{(s-1)})
  =O(T^{s} G^{(0)})  \\
& = O(\frac{\kappa^{4s}\sqrt{k^s}}{\epsilon^{s}}\mathrm{polylog}^s (mn/ \epsilon)).
\end{split}
\end{eqnarray}

As for the space complexity, $O(\mathrm{log_2}(nk/\epsilon))$ qubits are used to prepare the initial state $|\psi_{A^{(0)}}\rangle$. In step 2 and step 3, the quantum phase estimations require $O(\mathrm{log_2}(1/\epsilon_1))$ qubits. In step 4, the controlled rotation requires $O(\mathrm{log_2}(1/\epsilon))$ ancillary qubits. Note that the qubits in the current iteration can be reused in the next iteration, thus the space complexity of the algorithm is $O(\mathrm{log_2}(nk/\epsilon))$. The details are shown in the Appendix \ref{app:complexity}.

\section{An improved quantum AOP algorithm}
In this section, we present an improved quantum AOP algorithm.

In the DYXL algorithm, to get the state $|\psi_{A^{(i)}}\rangle$ from state $|\psi_{A^{(i-1)}}\rangle$, one performs phase estimation and post selection, which consumes multiple copies of $|\psi_{A^{(i-1)}}\rangle$, thus the number of copies of the initial state $|\psi_{A^{(0)}}\rangle$ required depends exponentially on the number of iterations $s$. Note that in each iteration of the DYXL algorithm, one only changes the eigenvalue $\beta_j$ for $ j=0,1,..,k-1$. In our algorithm, we put the calculation into the computational basis to reduce the consumption of the copies of $|\psi_{A^{(i-1)}}\rangle$.

\subsection{An improved quantum AOP algorithm}

The specific process of our quantum algorithm is as follows.

1. \emph{Initialization} Initialize $i=0$, and prepare the state $|\psi_{A^{(0)}}\rangle=\frac{1}{\sqrt{k}}\sum_{j=0}^{k-1}|\textbf{u}_j\rangle |\textbf{j}\rangle.$

2. \emph{Prepare state $|\psi_0\rangle$} Perform phase estimation with precision parameter $\epsilon$ on the state $|\psi_{A^{(0)}}\rangle$ for the unitary $e^{i\widetilde{X} \widetilde{X}^ \dagger t_0}$, and then append state $|\frac{1}{\sqrt{k}}\rangle|0\rangle$, thus we obtain
    \begin{eqnarray}
    \begin{split}
        |\psi_{0}\rangle
        &= \frac{1}{\sqrt{k}}\sum_{j=0}^{k-1}(|\textbf{u}_j\rangle |\textbf{j}\rangle)^A |\sigma_j^2 \rangle^B |\frac{1}{\sqrt{k}} \rangle^C |0\rangle^D \\
        &= \frac{1}{\sqrt{k}}\sum_{j=0}^{k-1}(|\textbf{u}_j\rangle |\textbf{j}\rangle)^A |\sigma_j^2 \rangle^B |\beta_j^{(0)}\rangle^C |0\rangle^D, \\
    \end{split}
    \end{eqnarray}
where $\beta_j^{(0)}=\frac{1}{\sqrt{k}},$ for $j=0,1,...,k-1$, the superscript $A$, $B$, $C$, $D$ represent the register $A$, $B$, $C$ and $D$, respectively (in the absence of ambiguity, we omit these superscripts below for the sake of simplicity).

Assuming that we can prepare the state $|\psi_{i-1}\rangle$, where
    \begin{eqnarray}
    \begin{aligned}
        |\psi_{i-1}\rangle =\frac{1}{\sqrt{k}}\sum_{j=0}^{k-1}|\textbf{u}_j\rangle |\textbf{j}\rangle|\sigma_j^2 \rangle |\beta_j^{(i-1)}\rangle |0 \rangle.
    \end{aligned}
    \end{eqnarray}
Thus we could perform quantum arithmetic operations to get
    \begin{eqnarray}
    \begin{split}\label{eq:19}
        |\phi_1^{(i)}\rangle =\frac{1}{\sqrt{k}}\sum_{j=0}^{k-1}|\textbf{u}_j\rangle |\textbf{j}\rangle|\sigma_j^2 \rangle |\beta_j^{(i-1)}\rangle |c^{(i)}\beta_j^{(i)} \rangle,\\
    \end{split}
    \end{eqnarray}
where $c^{(i)} \beta_j^{(i)} = \frac{(\sigma_j \beta_j^{(i-1)})^2+ \lambda_2}{\sigma_j^2 \beta_j^{(i-1)}}$ and $\sum_{j=0}^{k-1} (\beta_j^{(i)})^2=1$.

In order to obtain the information of $\beta_j^{(i)}$, we will estimate $c^{(i)}$ first, then we can prepare the state $\frac{1} {\sqrt{k}} \sum_{j=0}^{k-1}|\textbf{u}_j\rangle |\textbf{j}\rangle|\sigma_j^2 \rangle |\beta_j^{(i-1)}\rangle |\beta_j^{(i)} \rangle:=|\phi_3^{(i)}\rangle$ from the state $|\phi_1^{(i)}\rangle$.

3. \emph{Estimate $c^{(i)}$} Assuming that we can prepare the state $|\psi_{i-1}\rangle$ in time $G_{i-1}$.

(\romannumeral1) Prepare the state $|\phi_1^{(i)}\rangle$ from the state $|\psi_{i-1}\rangle$.

(\romannumeral2) Add an ancillary qubit (register $E$) and perform an appropriate controlled rotation on the registers $D$ and $E$, transforms the system to:
     \begin{eqnarray}
     \begin{split}
        |\phi_2^{(i)}\rangle=&\frac{1}{\sqrt{k}}\sum_{j=0}^{k-1}(|\textbf{u}_j\rangle |\textbf{j}\rangle)^A|\sigma_j^2 \rangle^B |\beta_j^{(i-1)}\rangle^C |c^{(i)}\beta_j^{(i)} \rangle^D \\
        &(\sqrt{1-(\frac{c^{(i)}\beta_j^{(i)}}{c})^2}|0\rangle+\frac{c^{(i)}\beta_j^{(i)}}{c}|1\rangle)^E \\
        &:=\cos(\theta)|a\rangle|0\rangle^E + \sin(\theta) |b\rangle |1\rangle^E,
     \end{split}
     \end{eqnarray}
where the parameter $c$ is a constant to ensure $\frac{c^{(i)}\beta_j^{(i)}}{c} \le 1$,
\begin{gather}
|a\rangle= \sum_{j=0}^{k-1}\sqrt{\frac{c^2-(c^{(i)}\beta_j^{(i)})^2}{kc^2-(c^{(i)})^2}}|\textbf{u}_j\rangle |\textbf{j}\rangle|\sigma_j^2 \rangle |\beta_j^{(i-1)}\rangle |c^{(i)}\beta_j^{(i)} \rangle,\notag\\
|b\rangle= \sum_{j=0}^{k-1}\beta_j^{(i)}|\textbf{u}_j\rangle |\textbf{j}\rangle|\sigma_j^2 \rangle |\beta_j^{(i-1)}\rangle |c^{(i)}\beta_j^{(i)} \rangle,\notag\\
\label{eq:21}\sin(\theta)= \frac{c^{(i)}}{c\sqrt{k}}.
\end{gather}

(\romannumeral3) Perform quantum amplitude estimation to estimate $\sin(\theta)$. Then we can obtain the classical information of $c^{(i)}$ by $c^{(i)}=\sqrt{k}c\sin(\theta)$.

4. \emph{Prepare state $|\psi_{i}\rangle$}

(\romannumeral1) Since we have the classical information of $c^{(i)}$, we can perform quantum arithmetic operation to get
    \begin{eqnarray}
    \begin{split}\label{eq:22}
        |\phi_3^{(i)}\rangle =\frac{1}{\sqrt{k}}\sum_{j=0}^{k-1}|\textbf{u}_j\rangle |\textbf{j}\rangle|\sigma_j^2 \rangle^B |\beta_j^{(i-1)}\rangle^C |\beta_j^{(i)} \rangle^D.
    \end{split}
    \end{eqnarray}

Note that by using the techniques from step 2 to stage (\romannumeral1) of step 4, we could prepare the state $\frac{1}{\sqrt{k}}\sum_{j=0}^{k-1}|\textbf{u}_j\rangle |\textbf{j}\rangle|\sigma_j^2 \rangle |\beta_j^{(0)}\rangle|\beta_j^{(1)}\rangle... |\beta_j^{(s)} \rangle$ from the state $|\psi_{A^{(0)}}\rangle$. Then followed by controlled rotation, uncomputing and measurement, we could obtain the desired state $|\psi_{A^{(s)}}\rangle$. However, it requires much more space resource than DYXL algorithm. To reduce the space complexity, we transform the register $C$ to $|0\rangle$ and only keep $|\beta_j^{(i)}\rangle$ after iteration $i$, i.e., obtain state $|\psi_{i}\rangle =\frac{1}{\sqrt{k}}\sum_{j=0}^{k-1}|\textbf{u}_j\rangle |\textbf{j}\rangle|\sigma_j^2 \rangle^B |\beta_j^{(i)} \rangle^D |0\rangle^C$.

(\romannumeral2) Perform quantum arithmetic operation on register $B$, $C$, and $D$, to get $|\psi_{i}\rangle$.

 Since $c^{(i)} \beta_j^{(i)} = \frac{(\sigma_j \beta_j^{(i-1)})^2+ \lambda_2}{\sigma_j^2 \beta_j^{(i-1)}}$, we have
\begin{eqnarray}
    \sigma_j^2 (\beta_j^{(i-1)})^2 - c^{(i)}\sigma_j^2 \beta_j^{(i)} \beta_j^{(i-1)}+ \lambda_2=0,
\end{eqnarray}
which is a one-variable quadratic equation about the variable $\beta_j^{(i-1)}$. The solutions of the equation is
\begin{eqnarray}
    \beta_{j\pm}^{(i-1)} = \frac {c^{(i)}\sigma_j^2 \beta_j^{(i)} \pm \sqrt{(c^{(i)}\sigma_j^2 \beta_j^{(i)})^2 - 4\sigma_j^2 \lambda_2}}{2\sigma_j^2}.
\end{eqnarray}

Two cases are considered here. In case 1, when $\lambda_2 \ge 1$, then for $x \le 1$, the function $f(x)=\frac{(\sigma_j x)^2+\lambda_2}{\sigma_j^2 x}$ is a monotonic decreasing function, which means that $\beta_{j}^{(i-1)} = \beta_{j-}^{(i-1)}$. In case 2, when $\lambda_2 < 1$, we add a qubit (register $F$) to store the magnitude relationship between $\beta_j^{(i-1)}$ and $\sqrt{\frac{\lambda}{\sigma^2}}$ on the state in equation (\ref{eq:22}), i.e.,
    \begin{eqnarray*}
    \begin{split}
        |\phi_4^{i}\rangle =\frac{1}{\sqrt{k}}\sum_{j=0}^{k-1}|\textbf{u}_j\rangle |\textbf{j}\rangle|\sigma_j^2 \rangle^B |\beta_j^{(i-1)}\rangle^C |\beta_j^{(i)} \rangle^D |\gamma_j^{(i)}\rangle^F,
    \end{split}
    \end{eqnarray*}
where
    \begin{eqnarray}
    \begin{split}
        |\gamma_j^{(i)}\rangle =
            \begin{cases}
                |1\rangle, & if \mbox{ } \beta_j^{(i-1)} \ge \sqrt{\frac{\lambda}{\sigma^2}}, \\
                |0\rangle, & if \mbox{ } \beta_j^{(i-1)} < \sqrt{\frac{\lambda}{\sigma^2}}.
            \end{cases}
    \end{split}
    \end{eqnarray}
Then, according to the information in register $F$, we could get
    \begin{eqnarray}
    \begin{split}
        \beta_{j}^{(i-1)} =
            \begin{cases}
                  \beta_{j+}^{(i-1)}, & if \mbox{ } |\gamma_j^{(i)}\rangle = |1\rangle,\\
                  \beta_{j-}^{(i-1)}, & if \mbox{ } |\gamma_j^{(i)}\rangle = |0\rangle.
            \end{cases}
    \end{split}
    \end{eqnarray}
Thus we could transform the state of register $C$ to $|0\rangle$ by a simple quantum arithmetic operation on register $C$ and $D$.
We should keep in mind that we need an ancillary qubit in each iteration for case 2.

5.  \emph{Iteration} For $i=1$ to $s-2$, repeat step 2 to 4. Thus we obtain state
    \begin{eqnarray}
    \begin{split}
        |\psi_{s-1}\rangle =\frac{1}{\sqrt{k}}\sum_{j=0}^{k-1}|\textbf{u}_j\rangle |\textbf{j}\rangle|\sigma_j^2 \rangle |\beta_j^{(s-1)} \rangle |0\rangle.
    \end{split}
    \end{eqnarray}

 6. \emph{Controlled rotation} Add an ancillary qubit (register $E$) and perform an appropriate controlled rotation on the state $|\psi_{s-1}\rangle|0\rangle^E$, transforms the system to
    \begin{eqnarray}
    \label{eq:eq23}
     \begin{split}
        |\phi_2^{(s)}\rangle=&\frac{1}{\sqrt{k}}\sum_{j=0}^{k-1}|\textbf{u}_j\rangle |\textbf{j}\rangle|\sigma_j^2 \rangle |\beta_j^{(s-1)}\rangle |c^{(s)}\beta_j^{(s)} \rangle \\
        &(\sqrt{1-(\frac{c^{(s)}\beta_j^{(s)}}{c})^2}|0\rangle+\frac{c^{(s)}\beta_j^{(s)}}{c}|1\rangle)^E.
     \end{split}
     \end{eqnarray}

 7. \emph{Uncomputing and measurement} Uncompute register $B$, $C$, and $D$, and measure the register $E$ to seeing 1, thus obtain
    \begin{eqnarray}
    |\psi_{A^{(s)}}\rangle =\sum_{j=0}^{k-1}\beta_j^{(s)}|\textbf{u}_j\rangle |\textbf{j}\rangle.
    \end{eqnarray}

\subsection{The complexity of the improved quantum AOP algorithm}
We have described an improved quantum AOP algorithm above. In this subsection, we will analyze the time complexity and space complexity of the algorithm.

The time complexity and space complexity of step 1 are $O(\log_2 (\epsilon^{-1}) \log_2(nk))$ and $O(\log_2(nk/\epsilon))$, the same as the DYXL algorithm.

In step 2, similar to the complexity of the step 3 of the DYXL algorithm, the phase estimation stage is of time complexity $O(\frac{1}{\epsilon_1} \mathrm{polylog} (\frac{nm}{\epsilon_1}))$ and space complexity $O(\log_2(1/\epsilon_1))$ with error $\epsilon_1$. The stage of appending registers $|\frac{1}{\sqrt{k}} \rangle^C |0 \rangle^D $ is of time complexity $O(\log_2(1/\epsilon_1))$, where the number of qubits in register $B$, $C$ and $D$ is $O(\log_2(1/\epsilon_1))$. Thus the time complexity of this step is $O(\frac{1}{\epsilon_1} \mathrm{polylog} (\frac{nm}{\epsilon_1}))$.

In step 3, since the time complexity of preparing the state $|\psi_{i-1}\rangle$ is much greater than the complexity of stage (\romannumeral1) and stage (\romannumeral2) (which is $O(\mathrm{polylog}(1/\epsilon_1))$ and $O(\log_2(1/\epsilon_1))$ respectively), we will neglect the complexity of these two stages. In stage (\romannumeral3), define
\begin{eqnarray*}
\begin{split}
U_{i-1}&: U_{i-1}|0\rangle= |\phi_2^{(i)}\rangle =\sin(\theta)|a\rangle|0\rangle + \cos(\theta) |b\rangle |1\rangle, \\
S_0&: S_0=I-2|0\rangle^{ABCDE}\langle0|^{ABCDE},\\
S_\chi&: S_\chi=I-2|1\rangle^{E}\langle1|^{E}.
\end{split}
\end{eqnarray*}
According to quantum amplitude estimation algorithm \cite{NC, BHMT}, the unitary operator $Q = -U_{i-1} S_0 U_{i-1}^\dag S_\chi$ act as a rotation on the two dimensional space $\mathrm{Span}\{|a\rangle|0\rangle, |b\rangle |1\rangle\}$, with eigenvalues $e^{\pm2i\theta}$ and corresponding eigenvectors $\frac{|a\rangle|0\rangle\mp|b\rangle |1\rangle}{2}$.
The quantum amplitude estimation algorithm will generate $\theta$ within error $\epsilon_2$, which means that $O(\log_2(1/\epsilon_2))$ qubits are required to store $\theta$. The corresponding time complexity is $O(\frac{1}{\epsilon_2}(2+\frac{1}{2\eta})G_{i-1})$, where $1-\eta$ is the probability to success (we could simply choose $\eta=O(1))$.
Finally, we obtain the classical information of $c^{(i)}$ within relative error $O(\kappa^2\epsilon_2)$ (see appendix \ref{app:D}), here we use relative error to ensure that the estimation of $c^{(i)}$ won't be influence by the scale of $c^{(i)}$.

In step 4, for the stage (\romannumeral1), similar to the analysis of the DYXL algorithm (see appendix \ref{app:complexity}), we want to bound the relative error of $\beta_j^{(i)}$ (denote as $\tilde{\epsilon}_{\beta}$) by $O(\epsilon)$. According to Appendix \ref{app:D}, $\tilde{\epsilon}_{\beta}=O(\kappa^2 \sqrt{k}\epsilon_1+\tilde{\epsilon}_{c}) = O(\kappa^2\sqrt{k} \epsilon_1+\kappa^2\epsilon_2)$, thus we can choose $\epsilon_1=O(\frac{\epsilon} {\kappa^2\sqrt{k}})$ and $\epsilon_2=O(\frac{\epsilon}{\kappa^2})$ to ensure $\tilde{\epsilon}_{\beta}=O(\epsilon)$. Since the classical information of $c^{(i)}$ is given by step 3, this stage is of time complexity $O(G_{i-1} + \mathrm{polylog}(\frac{\kappa k}{\epsilon}))$. As for the stage (\romannumeral2), the time complexity of the two cases are $O(\mathrm{polylog}(\frac{\kappa k}{\epsilon}))$. In the worst case (case 2 of stage (\romannumeral2)), we need an ancillary qubit (register $F$) in each iteration.

In step 6, the time complexities of the two operations are $O(\mathrm{polylog}(\epsilon_1))=O(\mathrm{polylog}(\frac{\kappa k}{\epsilon}))$ and $O(\log_2(1/\epsilon))$, which is similar with the stage (\romannumeral1) and stage (\romannumeral2) of step 3. Note that no extra qubit is needed here, since register $E$ is reused.

In step 7, the time complexity of the uncomputing stage is just the same as the time complexity to prepare state $|\phi_2^{(s)}\rangle$ (equation (\ref{eq:eq23})) from state $|\psi_{A^{(0)}}\rangle$.
For $c^{(s)}=\Omega(k)$ and $c=O(\sqrt{k}\kappa^2)$, the probability of seeing 1 is
    \begin{eqnarray}
    \begin{split}
        p(1) = \frac{1}{k}\sum_{j=0}^{k-1} (\frac{c^{(s)}}{c}\beta_j^{(s)})^2 = \Omega(\frac{1}{\kappa^4}).
    \end{split}
    \end{eqnarray}
We can use the quantum amplitude amplification \cite{NC, BHMT} to reduce the repetition to $O(\kappa^2)$.

Now we put all together. From the analysis of step 3, we know that the time complexity to estimate $c^{(i)}, i=1,...,s$ is $O_{c^{(i)}}=O(\frac{1}{\epsilon_2} G_{i-1})$. Also, from step 4, given $c^{(i)}$, the time complexity to prepare state $|\psi_{i}\rangle$ is
    \begin{eqnarray}
    \begin{split}
        G_{i} = G_{i-1}+2\mathrm{polylog}(\frac{\kappa k}{\epsilon}).
    \end{split}
    \end{eqnarray}
Thus given $c^{(i)}$ for $i=1,2,...,s-1$,
    \begin{eqnarray}
    \begin{split}
        G_{s-1} = G_{0}+2(s-1)\mathrm{polylog}(\frac{\kappa k}{\epsilon}),
    \end{split}
    \end{eqnarray}
where $G_0 = \frac{\kappa^2 \sqrt{k}}{\epsilon} \mathrm{polylog}(\frac{nm}{\epsilon})$.

The time complexity to obtain state $|\psi_{A^{(s)}}\rangle$ is as follows:
    \begin{eqnarray}
    \begin{split}
        O_{|\psi_{A^{(s)}}\rangle}&=O(\kappa^2 2O_{|\phi_2^{(s)}\rangle}) \\
        & =O(\kappa^2 (\mathrm{polylog}(\frac{\kappa k}{\epsilon})+G_{s-1} + \sum_{i=1}^{s-1} c^{(i)}))\\
        & = O(\frac{s \kappa^6 \sqrt{k}}{\epsilon}\mathrm{polylog}(\frac{nm}{\epsilon}) + \frac{s^2 \kappa^4}{\epsilon}\mathrm{polylog}(\frac{\kappa k}{\epsilon})).
    \end{split}
    \end{eqnarray}

As for the space complexity, $O(\log_2(nk/\epsilon))$ qubits are required to obtain the classical information of $c^{(i)}$ in step 3. Given state $|\psi_{i-1}\rangle$, an ancillary qubit is required to prepare state $|\psi_i\rangle$ in step 4 in iteration $i$ when $\lambda_2 \le 1$. if $\lambda_2 \ge 1$, no ancillary qubit is needed. Since this ancillary qubit cannot be reused by the other iterations, $O(s)$ ancillary qubits are required for all iterations. In step 5 to step 7, no ancillary qubit is needed. Putting all together, the space complexity of the algorithm is $O(\log_2(nk/\epsilon)+s)$.

Since our algorithm and the DYXL algorithm are iterative algorithms, we compare the number of iterations of the two quantum algorithms with the same loss. In each iteration of the DYXL algorithm, given the state $|\psi_{A^{(i-1)}}\rangle$, we can obtain the state $|\psi_{A^{(i)}}\rangle$ within a relative error $O(\epsilon)$ in $\beta_j^{(i)}$ (Appendix \ref{app:complexity}). And in each iteration of our algorithm, given $|\psi_{i-1}\rangle$, we obtain the state $|\psi_{i}\rangle$ within a relative error $O(\epsilon)$ in $\beta_j^{(i)}$, the same as the DYXL algorithm. Thus the two quantum algorithms will converge to the same loss with the same number of iterations.

\section{discussion}

The exponential speedup claimed by Duan \emph{et al}. \cite{DYXL} is based on the assumptions that $\kappa$, $k$ and $1/\epsilon$ are $O(\mathrm{polylog}(nm))$. We follow these assumptions to compare the two quantum algorithms. The advantage of our algorithm is that the time complexity is quadratically dependent on $s$, while the DYXL algorithm has an exponential dependence on $s$. If $s$ is a constant, our algorithm has a polynomial speedup over the DYXL algorithm. If $s$ grows linearly with $\log_2(mn)$, our algorithm has exponential speedups on $n$ and $m$ compared with the DYXL algorithm. Since the speedup of our algorithm is strongly dependent on $s$, we now analyze the value of $s$ below.

Note that the steps of the two quantum algorithms are exactly the same as those of the reformulated AOP algorithm in Section \ref{sub:2b}. Thus by controlling the precisions of each iteration to the same level, the number of iterations of the quantum algorithms is the same as the reformulated AOP algorithm. We estimate the number of iterations of the reformulated AOP algorithm in Appendix \ref{app:iteration} by numerical experiments on randomly generated datasets, since it is difficult to determine the value of $s$ through theoretical analysis. The experimental results show that $s=\Omega(k+\kappa+\log_2(1/\epsilon))$ may hold. If it holds, $s$ grows linearly with $\log_2(mn)$, which means that the exponential speedups on $n$ and $m$ of our algorithm hold. Note that the experimental results are based on randomly generated original datasets, it can't rule out the possibility that the parameter $s$ may be less in the practical datasets.

\section{Conclusion}
In this paper, we reanalyzed the DYXL algorithm in detail and corrected the complexity calculation. It was shown that the DYXL algorithm has an exponential dependence on the number of iterations $s$, thus the quantum algorithm may lose its advantage as $s$ increases. To get a further speedup, we presented an improved quantum AOP algorithm with a time complexity quadratic on $s$. Our algorithm achieves at least a polynomial speedup over the DYXL algorithm. When both $\kappa$, $k$ and $1/\epsilon$ are $O(\mathrm{polylog}(nm))$, our algorithm achieves exponential speedups compared with the classical algorithm on $n$ and $m$. As for the space complexity, our algorithm is slightly worse than DYXL algorithm.

The speedups of our algorithm mainly come from the idea of putting the information to be updated into computational basis, which saves the consumption of the current candidates. We hope this idea could inspire more iterative algorithms to get a quantum speedup. We will explore the possibility in the future.

\section*{Acknowledgements}
We would like to thank the anonymous referees for their helpful comments. S-J Pan thanks Chun-Guang Li, Chun-Tao Ding and Sheng-Jie Li for fruitful discussions. This work was supported by
the National Natural Science Foundation of China (Grants No. 61672110, No. 61671082, No. 61976024, No. 61972048, and No. 61801126), the Fundamental Research Funds for the Central Universities (Grant No. 2019XDA01), the Open project of CAS Key Laboratory of Quantum Information, University of Science and Technology of China (Grant No. KQI201902) and the Beijing Excellent Talents Training Funding Project (Grant No. 201800002685XG356).

\appendix
\section{The complexity of each step of the DYXL algorithm}
\label{app:complexity}

In this appendix, we analyze the time complexity of each step of the DYXL algorithm in detail. Different from the original paper \cite{DYXL}, on the one hand, we use the best-known results on Hamiltonian simulation to get a tight bound of the time complexity, on the other hand, we estimate the parameters which have not been estimated in the original paper or need to be correct.

In step 1, $ |\psi_{A^{(0)}}\rangle$ can be written in computational basis as $\sum_{x_1 x_2... x_l \in \{0,1\}^{l}} \alpha_{x_1 x_2... x_l}|x_1 x_2... x_l\rangle$, where $l=k\log_2 n$. By the assumption that the elements of $A^{(0)}$ and $\omega^{(i)}$ are given and stored in QRAM, where $\omega^{(i)}$ is defined as $\cos^2(2\pi\omega_i)=(\frac{\alpha_{x_1 x_2... x_{i-1} 0}}{{x_1 x_2... x_{i-1}}})^2+O(\mathrm{poly}(\epsilon))$,
the time complexity and space complexity for preparing $|\psi_{A^{(0)}}\rangle$ are $O(\log_2 (\epsilon^{-1}) \log_2(nk))$ and $O(\log_2(nk/\epsilon))$ \cite{DYXL}.

In step 3, Assume the condition number of $\widetilde{X}$ and $A^{(i-1)}$ is $\kappa$ and $\kappa^{(i-1)}$, then $\kappa^{(i-1)} = O(\kappa^2)$ (the proof is given in appendix \ref{app:A}. We should mention that in the original paper \cite{DYXL}, the author gave a wrong estimate of $\kappa^{(i-1)}$, which influenced the total complexity). Note that
\begin{eqnarray}
\label{eq:equ14}
\begin{split}
\mathrm{tr}_2(|\psi_{A^{(i-1)}}\rangle\langle \psi_{A^{(i-1)}|)} &= \sum_{j=1}^k (\beta_j^{(i-1)})^2 |\mathbf{u}_j \rangle \langle \mathbf{u}_j| \\
&=A^{(i-1)}{A^{(i-1)}}^\dagger.
\end{split}
\end{eqnarray}
According to \emph{Corollary} \ref{Hampure} (\emph{Corollary} 17 of \cite{LC}), the time complexity to simulate $e^{iA^{(i-1)}{{A^{(i-1)}}^\dagger t_0}}$ within error $\epsilon_0$ is $O\left((t_0+ \mathrm{log_2} (1/\epsilon_0)) G^{(i-1)}\right)$, where $G^{(i-1)}$ is the time complexity to prepare state $|\psi_{A^{(i-1)}}\rangle$.
For the simulation of $\widetilde{X} \widetilde{X}^ \dagger$, assume there is a quantum circuit to prepare state $|\psi_{\widetilde{X}}\rangle = \frac{1}{\|\widetilde{X}\|_F} \sum_{i=0}^{n-1} \sum_{j=0}^{m-1} \widetilde{X}_{ij} |i\rangle |j\rangle = \frac{1}{\|\widetilde{X}\|_F}\sum_{j=0}^{r-1} \sigma_j |\mathbf{u}_j\rangle |\mathbf{v}_j\rangle$ in time $O(\mathrm{polylog}(nm))$. Notice that $\widetilde{X} \widetilde{X}^ \dagger = \|\widetilde{X}\|_F^2\mathrm{tr}_2(|\psi_{\widetilde{X}}\rangle\langle \psi_{\widetilde{X}}|)$, where $\|\widetilde{X}\|_F = \sqrt{\sum_{j=0}^{r-1} \sigma_j^2} \le \sqrt {r} = O(\mathrm{polylog}(nm))$. Thus the time complexity to simulate $e^{i\widetilde{X} \widetilde{X}^ \dagger t_0}$ within $\epsilon_0$ is $O \left(\mathrm{polylog} (nm) (\|\widetilde{X}\|_F^2 t_0 + \mathrm{log_2} (1 / \epsilon_0)) \right)$.
\begin{corollary}(\cite{LC})
\label{Hampure}
Given access to the oracle $\hat{G}$ specifying a Hamiltonian $\hat{H}=\hat{\rho}$ that is a density matrix $\hat{\rho}$, where
\begin{eqnarray}
\label{eq:equ13}
\begin{split}
&\hat{G}|0\rangle_a=|G\rangle_a=\sum_{j}\sqrt{\alpha_j}|j\rangle_{a_1}|\chi_j\rangle_{a_2},\\
&\hat\rho = \mathrm{tr}{|G\rangle\langle G|_{a_1}=\sum_{j}\alpha_j|\chi_j\rangle\langle\chi_j|},
\end{split}
\end{eqnarray}
time evolution by $\hat{H}$ can be simulated for time $t$ and error $\epsilon$ with $\mathcal{O}(t+\log_2{(1/\epsilon)})$ queries.
\end{corollary}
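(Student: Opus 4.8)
The plan is to prove the corollary by \emph{qubitization}: I would use the state-preparation oracle $\hat G$ to build a single unitary ``quantum walk'' whose spectrum encodes that of $\hat\rho$ through an $\arccos$, and then run quantum signal processing on this walk to synthesize an $\epsilon$-accurate block-encoding of $e^{-i\hat\rho t}$ using only $O(t+\log_2(1/\epsilon))$ walk steps, each costing $O(1)$ queries to $\hat G$.

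First I would introduce the system register $s$ on which $e^{-i\hat\rho t}$ is to act (it has the same dimension as the purifying register $a_2$) and define the walk
\[
  W \;=\; \bigl(2|G\rangle\langle G|_a - I_a\bigr)\otimes I_s\;\cdot\;\bigl(I_{a_1}\otimes\mathrm{SWAP}_{a_2,s}\bigr),
\]
observing that $2|G\rangle\langle G|_a-I_a=\hat G_a\,(2|0\rangle\langle 0|_a-I_a)\,\hat G_a^{\dagger}$, so one application of $W$ uses one query to $\hat G$ and one to $\hat G^{\dagger}$ while the reflection about $|0\rangle_a$ and the SWAP are query-free. A one-line computation from $|G\rangle_a=\sum_j\sqrt{\alpha_j}|j\rangle_{a_1}|\chi_j\rangle_{a_2}$, using orthonormality of $\{|j\rangle_{a_1}\}$, yields the block-encoding identity $\langle G|_a\,\mathrm{SWAP}_{a_2,s}\,|G\rangle_a=\sum_j\alpha_j|\chi_j\rangle\langle\chi_j|_s=\hat\rho$.

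Next, for each eigenpair $(\alpha_j,|\chi_j\rangle)$ of $\hat\rho$ I would analyze the (at most) two-dimensional subspace $\mathcal H_j=\mathrm{span}\{\,|G\rangle_a|\chi_j\rangle_s,\ \mathrm{SWAP}_{a_2,s}(|G\rangle_a|\chi_j\rangle_s)\,\}$. Using the identity above one checks that both the SWAP and the reflection $2|G\rangle\langle G|_a-I_a$ leave $\mathcal H_j$ invariant and act on it as reflections (involutions with $\pm1$ eigenvalues), so $W$ restricted to $\mathcal H_j$ is a planar rotation; the overlap fixing its angle is $\langle\chi_j|\hat\rho|\chi_j\rangle=\alpha_j$, whence $W|_{\mathcal H_j}$ has eigenvalues $e^{\pm i\arccos\alpha_j}$. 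Because $W$ is unitary and $\bigoplus_j\mathcal H_j$ is $W$-invariant, so is its orthogonal complement, and the degenerate boundary cases $\alpha_j\in\{0,1\}$ (where $\mathcal H_j$ collapses to a line) are dealt with directly. This establishes that $W$ is a qubitization of $\hat\rho$: on the $\hat G|0\rangle_a$-encoded subspace it is a direct sum of qubit rotations whose angles are $\arccos$ of the eigenvalues of $\hat\rho$. With this in hand, the Jacobi--Anger expansion shows $e^{-ixt}$ is uniformly $\epsilon$-approximated on $[-1,1]$ by a polynomial of degree $d=O(t+\log_2(1/\epsilon))$, and quantum signal processing turns $d$ applications of $W$ (plus $d$ interleaved single-qubit ancilla rotations) into a unitary $\epsilon$-close to $\sum_j e^{-i\alpha_j t}|\chi_j\rangle\langle\chi_j|_s=e^{-i\hat\rho t}$ on the encoded subspace (an extra control qubit selects and recombines the even- and odd-parity QSP sequences into $e^{-ixt}$). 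Since each $W$ costs $O(1)$ queries to $\hat G$, the total is $O(t+\log_2(1/\epsilon))$ queries.

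The hard part is the qubitization claim: showing the subspaces $\mathcal H_j$ are \emph{exactly} (not approximately) $W$-invariant, which uses Hermiticity of $\hat\rho$ and the freedom to take $\{|\chi_j\rangle\}$ orthonormal; computing $W|_{\mathcal H_j}$ in an orthonormal basis to pin down the $\arccos\alpha_j$ angle; and threading that block structure, with correct parity bookkeeping, through QSP. This qubitized-walk construction --- as opposed to infinitesimal-SWAP density-matrix exponentiation, which would need $O(t^2/\epsilon)$ queries --- is exactly what brings the query count down to the optimal $O(t+\log_2(1/\epsilon))$.
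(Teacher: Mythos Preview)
The paper does not prove this corollary at all: it is quoted verbatim as Corollary~17 of~\cite{LC} and used as a black box in the complexity analysis of Appendix~\ref{app:complexity}. There is therefore no ``paper's own proof'' to compare against.

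That said, your sketch is essentially the Low--Chuang argument that the citation points to. The identification $\langle G|_a\,\mathrm{SWAP}_{a_2,s}\,|G\rangle_a=\hat\rho$ gives the block-encoding, the reflection--SWAP walk $W$ qubitizes it into two-dimensional invariant sectors with eigenphases $\pm\arccos\alpha_j$, and quantum signal processing with the Jacobi--Anger degree bound $d=O(t+\log_2(1/\epsilon))$ delivers the query count. One small point worth tightening: the invariance of $\mathcal H_j$ under $2|G\rangle\langle G|_a-I_a$ is not automatic from the overlap computation alone; you need that the component of $\mathrm{SWAP}_{a_2,s}|G\rangle_a|\chi_j\rangle_s$ orthogonal to $|G\rangle_a|\chi_j\rangle_s$ is also orthogonal to $|G\rangle_a|\chi_{j'}\rangle_s$ for $j'\neq j$, which follows from $\langle\chi_{j'}|\hat\rho|\chi_j\rangle=0$. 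Apart from that, the outline is sound and matches the cited source.
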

According to \cite{CEMM,LP,NC}, taking $O(1/\epsilon_1)$ times of controlled-$e^{iA^{(i-1)}{{A^{(i-1)}}^{\dagger} t_0}}$ to perform phase estimation ensures that the eigenvalues $\beta_j^{(i-1)}$ being estimated within error $O(\epsilon_1)$, so as the phase estimation on $e^{i\widetilde{X} \widetilde{X}^ \dagger t_0}$. Let $\epsilon_0= \epsilon_1^2$ and $t_0=O(1)$, the time complexity of the two phase estimations are $O\left((G^{(i-1)}/\epsilon_1) \log_2(1/\epsilon_1)\right)$ and $O\left((1/\epsilon_1) \mathrm {polylog}(nm/\epsilon_1)\right)$ respectively, while the space complexity of these two phase estimations are $O(\log_2(1/\epsilon_1))$.

The implementation of controlled rotation of step 4 can be divide into two stages \cite{DYXL}. The first stage is a quantum circuit to compute $y_j=\rho\frac{(\sigma_j \beta_j^{(i-1)})^2+\lambda_2}{(\sigma_j {\beta_j^{(i-1)}})^2}$ and store in an auxiliary register $L$ with $O(\mathrm{log_2}(1/\epsilon))$ qubits, where $\rho=O(\frac{1}{2\lambda_2 k \kappa^4})$ (the proof is given in appendix \ref{app:B}, we should mention that in the original paper \cite{DYXL}, the authors did not analyze the parameter $\rho$ which has a strong correlation with the total complexity).
Since $\beta_j^{(i-1)}$ and $\sigma_j$  is estimate with error $O(\epsilon_1)$, the relative error of estimating $y_j$ is
\begin{eqnarray}
\begin{split}
\tilde{\epsilon}_{y}
&= O\left(\frac{\lambda_2 (\beta_j^{(i-1)})^2 \sigma_j+\lambda_2 \beta_j^{(i-1)} \sigma_j^2}{(\beta_j^{(i-1)})^4+\sigma_j^4+\lambda_2 (\beta_j^{(i-1)})^2 \sigma_j^2}\epsilon_1 \right) \\
&= O\left( \frac{\lambda_2(\beta_j^{(i-1)}+\sigma_j)\beta_j^{(i-1)}\sigma_j}{(\lambda_2+2) (\beta_j^{(i-1)})^2 \sigma_j^2}\epsilon_1  \right) \\
&= O(\kappa^2\sqrt{k}\epsilon_1),
\end{split}
\end{eqnarray}
in the first equation we neglect the terms with the power of $\epsilon_1$ greater than $1$ and in the last equation we use the conclusion of equation (\ref{eq:B1}).
To ensure that the final error of this iteration is within $O(\epsilon)$, we could take $\tilde{\epsilon}_y=O(\epsilon)$, which means $\epsilon_1=O(\frac{\epsilon}{\kappa^2\sqrt{k}})$.
Following the result of \cite{CPPTK}, the time complexity of this stage is $O(\mathrm{polylog} (1/\epsilon)))$. The second stage is to perform controlled rotation $CR$ on the state, where $CR|y_j\rangle^L |0\rangle^D = |y_j\rangle^L (\sqrt{1-y_j^2}|0\rangle + y_j|1\rangle)^D$. The time complexity of this stage is $O(\log_2(1/\epsilon))$ \cite{HHL,DYLL,WBL,SSP, YGW, YGLHRW}.

In step 5, the probability to seeing 1 in register D is $p(1) =\rho^2 \sum_j \left( \frac{(\sigma_j \beta_j^{(i-1)})^2+ \lambda_2 }{\sigma_j^2 \beta_j^{(i-1)}} \right)^2 = O(\frac{1}{\kappa^4})$
as shown in Appendix \ref{app:B} (We should mention that in the original paper \cite{DYXL}, the authors did not analyze this probability which is directly related to the total complexity). Using amplitude amplification \cite{BHMT}, we find
that $O(\kappa^2)$ repetitions are sufficient.

\section{Estimate the parameter $\kappa^{(i)}$}
\label{app:A}
In this appendix, we analyze the condition number $\kappa^{(i)}$ of $A^{(i)}$.
Firstly, we give the following theorem:

\begin{theorem}\label{the:2}
If $\beta_j^{(i-1)}>\beta_{j'}^{(i-1)}$, then $\beta_j^{(i)}>\beta_{j'}^{(i)}$.
\end{theorem}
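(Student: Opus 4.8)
The plan is to work directly with the explicit update of the singular values. Expanding equation~(\ref{eq:equ7}),
\[
\beta_j^{(i)}=\frac{(\sigma_j\beta_j^{(i-1)})^2+\lambda_2}{c^{(i)}\sigma_j^2\beta_j^{(i-1)}}=\frac{1}{c^{(i)}}\left(\beta_j^{(i-1)}+\frac{\lambda_2}{\sigma_j^2\beta_j^{(i-1)}}\right),
\]
where $c^{(i)}>0$ is a normalization constant (namely $\sqrt{N^{(i)}}$) common to every $j$, and where all $\beta_j^{(i)}$ stay strictly positive because $\beta_j^{(0)}=1/\sqrt k>0$ (equation~(\ref{eq:equ8})) and $\sigma_j>0$. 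The subtlety is that this map depends on $\sigma_j$ as well as on $\beta_j^{(i-1)}$, so ``preservation of order'' is not a one-variable monotonicity statement; I would instead track how $\beta_j^{(i)}$ varies with $\sigma_j$.

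Concretely, I would prove by induction on $i\ge 1$ the following pair of statements, holding for every pair of indices: \textbf{(I)} $\sigma_j<\sigma_{j'}\ \Rightarrow\ \beta_j^{(i)}>\beta_{j'}^{(i)}$, and \textbf{(II)} $\sigma_j<\sigma_{j'}\ \Rightarrow\ \sigma_j^2\beta_j^{(i)}<\sigma_{j'}^2\beta_{j'}^{(i)}$ (with the obvious analogues got by replacing ``$<$'' throughout by ``$=$'' or by ``$>$''). The base case $i=1$ follows by substituting $\beta_j^{(0)}=1/\sqrt k$: one gets $\beta_j^{(1)}=\frac{1}{c^{(1)}\sqrt k}\bigl(1+k\lambda_2/\sigma_j^2\bigr)$, strictly decreasing in $\sigma_j$, and $\sigma_j^2\beta_j^{(1)}=\frac{1}{c^{(1)}\sqrt k}\bigl(\sigma_j^2+k\lambda_2\bigr)$, strictly increasing in $\sigma_j$. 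For the inductive step, assume \textbf{(I)} and \textbf{(II)} at level $i-1$ and fix $\sigma_j<\sigma_{j'}$, so that $\beta_j^{(i-1)}>\beta_{j'}^{(i-1)}$ and $\sigma_j^2\beta_j^{(i-1)}<\sigma_{j'}^2\beta_{j'}^{(i-1)}$. Feeding these into the two identities
\[
c^{(i)}\beta_j^{(i)}=\beta_j^{(i-1)}+\frac{\lambda_2}{\sigma_j^2\beta_j^{(i-1)}},\qquad c^{(i)}\sigma_j^2\beta_j^{(i)}=\sigma_j^2\beta_j^{(i-1)}+\frac{\lambda_2}{\beta_j^{(i-1)}},
\]
both summands on the right of the first identity are larger for index $j$ than for $j'$ (the first by \textbf{(I)}, the second because $\sigma_j^2\beta_j^{(i-1)}<\sigma_{j'}^2\beta_{j'}^{(i-1)}$), giving \textbf{(I)} at level $i$; and both summands on the right of the second identity are smaller for $j$ than for $j'$ (the first by \textbf{(II)}, the second because $\beta_j^{(i-1)}>\beta_{j'}^{(i-1)}$), giving \textbf{(II)} at level $i$.

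To conclude the stated implication, suppose $\beta_j^{(i-1)}>\beta_{j'}^{(i-1)}$. If $i=1$ this is impossible, since $\beta_j^{(0)}=\beta_{j'}^{(0)}=1/\sqrt k$, so the assertion is vacuous (likewise if $\sigma_j=\sigma_{j'}$, where $\beta_j^{(i)}=\beta_{j'}^{(i)}$ at every level); if $i\ge 2$, then the level-$(i-1)$ version of \textbf{(I)} and its analogues force $\sigma_j<\sigma_{j'}$, whence \textbf{(I)} at level $i$ yields $\beta_j^{(i)}>\beta_{j'}^{(i)}$. The one point requiring care is recognizing that \textbf{(I)} does not close the induction by itself: controlling the sign of the $\lambda_2/(\sigma_j^2\beta_j^{(i-1)})$ correction term needs the companion invariant \textbf{(II)}, and symmetrically, so the two statements must be carried through the induction simultaneously; the remaining ingredients — positivity of $c^{(i)}$ and of all $\beta_j^{(i)}$, and the trivial case of repeated singular values — are routine.
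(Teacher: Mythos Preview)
Your proposal is correct and follows essentially the same route as the paper: both arguments carry through, by induction on $i$, the pair of invariants $\beta_j^{(i)}\ge\beta_{j'}^{(i)}$ and $\sigma_j^2\beta_j^{(i)}\le\sigma_{j'}^2\beta_{j'}^{(i)}$ (for indices ordered so that $\sigma_j\le\sigma_{j'}$), using the two rewritings $c^{(i)}\beta_j^{(i)}=\beta_j^{(i-1)}+\lambda_2/(\sigma_j^2\beta_j^{(i-1)})$ and $c^{(i)}\sigma_j^2\beta_j^{(i)}=\sigma_j^2\beta_j^{(i-1)}+\lambda_2/\beta_j^{(i-1)}$ to close each half of the induction from the other. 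The only notable difference is one of bookkeeping: the paper indexes by $j>j'$ and works with non-strict inequalities throughout, whereas you parametrize by $\sigma_j<\sigma_{j'}$, track strict inequalities, and explicitly dispose of the vacuous base case $i=1$ and the degenerate case $\sigma_j=\sigma_{j'}$ before deducing the strict conclusion actually stated in the theorem---a tidier finish than the paper's own write-up.
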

\begin{proof}\renewcommand{\qedsymbol}{}
(1) Note that $\beta_0^{(0)}=\beta_1^{(0)}=...= \beta_{k-1}^{(0)}$ and $1 = \sigma_0 \ge \sigma_1\ge ... \ge \sigma_{k-1} = \frac{1}{\kappa} $. For $j>j'$, the following inequalities hold:
\begin{eqnarray}
\begin{split}
    \beta_j^{(0)} &\ge \beta_{j'}^{(0)}, \\
    \sigma_j^2 \beta_j^{(0)} &\le \sigma_{j'}^2 \beta_{j'}^{(0)}.\\
\end{split}
\end{eqnarray}

(2) Assuming that for $i \ge 1$, the following inequalities hold:
\begin{eqnarray}
\begin{split}
    \beta_j^{(i-1)} &\ge \beta_{j'}^{(i-1)}, \\
    \sigma_j^2 \beta_j^{(i-1)} &\le \sigma_{j'}^2 \beta_{j'}^{(i-1)}.\\
\end{split}
\end{eqnarray}
 Then
 \begin{eqnarray}
\begin{split}
    \sigma_j^2 \beta_j^{(i)} &= \sigma_j^2 \frac{(\sigma_j \beta_j^{(i-1)})^2 + \lambda_2 }{c^{(i-1)}\sigma_j^2 \beta_j^{(i-1)} } \\
    &= \frac{1}{c^{(i-1)}} \sigma_j^2 \beta_j^{(i-1)} + \frac{\lambda_2}{c^{(i-1)} \beta_j^{(i-1)}} \\
    &\le  \frac{1}{c^{(i-1)}} \sigma_{j'}^2 \beta_{j'}^{(i-1)} + \frac{\lambda_2}{c^{(i-1)} \beta_{j'}^{(i-1)}} \\
    &=  \sigma_{j'}^2 \beta_{j'}^{(i)}.
\end{split}
\end{eqnarray}
 Also,
\begin{eqnarray}
\begin{split}
    \beta_j^{(i)} &= \frac{(\sigma_j \beta_j^{(i-1)})^2+ \lambda_2}{c^{(i-1)}\sigma_j^2 \beta_j^{(i-1)}} \\
    &=\frac{1}{c^{(i-1)}}\left(\beta_j^{(i-1)} + \frac{\lambda_2}{\sigma_j^2 \beta_j^{(i-1)}} \right) \\
    &\ge \frac{1}{c^{(i-1)}}\left(\beta_{j'}^{(i-1)} + \frac{\lambda_2}{\sigma_{j'}^2 \beta_{j'}^{(i-1)}} \right)\\
    &=\beta_{j'}^{(i)}.
\end{split}
\end{eqnarray}

Thus the theorem holds.
\end{proof}

According to \emph{Theorem} \ref{the:2}, we have:
\begin{eqnarray}
\begin{split}
\max_j \beta_j^{(i)}&=\frac{\max_j (\beta_j^{(i-1)})^2 + \lambda_2 \kappa^2}{c^{(i-1)}\max_j \beta_j^{(i-1)}}, \\
\min_j \beta_j^{(i)}&=\frac{\min_j (\beta_j^{(i-1)})^2 + \lambda_2}{c^{(i-1)}\min_j \beta_j^{(i-1)}}.
\end{split}
\end{eqnarray}
Let $\max_j(\beta_j^{(i)})^2 = a^{(i)} \le 1$, thus,
\begin{eqnarray}
\label{eq:eqA1}
\begin{split}
\kappa^{(i)} &= \frac{\max_j \beta_j^{(i)}}{\min_j \beta_j^{(i)}} \\
& = \frac{\max_j(\beta_j^{(i-1)})^2 + \lambda_2 \kappa^2}{c^{(i-1)} \max_j \beta_j^{(i-1)}} \frac{c^{(i-1)}\min_j \beta_j^{(i-1)}} {\min_j (\beta_j^{(i-1)})^2 + \lambda_2} \\
& = \frac{(a^{(i-1)} + \lambda_2 \kappa^2)\kappa^{(i-1)}}{a^{(i-1)} + \lambda_2(\kappa^{(i-1)})^2}.
\end{split}
\end{eqnarray}

Note that $\beta_j^{(0)}$ takes the same value for $j=0,1,...k-1$, thus $\kappa^{(0)} = 1$, $\kappa^{(1)} = \frac{a^{(0) + \lambda_2 \kappa^2}}{a^{(0)} + \lambda_2}$. According to equation (\ref{eq:eqA1}), the following equation holds:
\begin{eqnarray}
\label{eq:eqA2}
\begin{split}
\kappa^{(i-1)}\kappa^{(i)} = \frac{(a^{(i-1)} + \lambda_2 \kappa^2)(\kappa^{(i-1)})^2}{a^{(i-1)} + \lambda_2(\kappa^{(i-1)})^2}.
\end{split}
\end{eqnarray}
According to equation (\ref{eq:eqA1}) and  (\ref{eq:eqA2}), we have:
\begin{eqnarray}
\begin{split}
 &\kappa^{(i)}\kappa^{(i-1)} \ge \kappa^2, \quad &if \quad \kappa^{(i-1)} \ge \kappa,\\
 &\kappa^{(i)}\kappa^{(i-1)} \le \kappa^2, \quad &if \quad \kappa^{(i-1)} \le \kappa,
\end{split}
\end{eqnarray}
Since $\kappa=O(\mathrm{polylog}(NM))$, $\lambda_2=O(1)$, we assume that $\lambda_2 \kappa > 1$, thus $\kappa^{(0)}=1 \le \kappa$, $\kappa^{(1)}=\frac{a^{(0) + \lambda_2 \kappa^2}}{a^{(0)} + \lambda_2} \ge \kappa$. Then
\begin{eqnarray}
\begin{split}
 &\kappa^{(2l)}\kappa^{(2l+1)} \le \kappa^2 \\
 &\kappa^{(2l+1)}\kappa^{(2l+2)} \ge \kappa^2.
\end{split}
\end{eqnarray}
Further,
\begin{eqnarray}
\begin{split}
 1&= \kappa^{(0)} \le \kappa^{(2)} \le ... \le \kappa^{(2l)} \le \kappa \\
 &\le \kappa^{(2l+1)} \le ... \le \kappa^{(3)} \le \kappa^{(1)} \\
 &=\frac{a^{(0) + \lambda_2 \kappa^2}}{a^{(0)} + \lambda_2} = O(\kappa^2).
 \end{split}
\end{eqnarray}
The sequence $\kappa^{(2l)}$ ($l=0,1,2,...\infty$) is monotonically increasing with upper bound $\kappa$, and the sequence $\kappa^{(2l+1)}$ ($l=0,1,2,...\infty$) is monotonically decreasing with lower bound $\kappa$, thus the sequence $\kappa^{(i)}$ ($i=0,1,2,...\infty$) converges on $\kappa$, i.e. $\lim_{i\rightarrow \infty} \kappa^{(i)}=\kappa$.

In conclusion, We get $\kappa^{(i)}=O(\kappa^2)$ for $i \ge 0$.

\section{Estimate the parameter $\rho$ and the $p(1)$ of step 5 in the DYXL algorithm}
\label{app:B}
In this appendix, we analyze the value of parameter $\rho$ which first appeared in step 4 of the DYXL algorithm.

It is obvious that $\sum_{j=1}^k (\beta_j^{i-1})^2=1$, for $\beta_j^{(i-1)} (j=0,1,...,k-1)$ is the amplitude of the quantum state $|\psi_A^{(i-1)}\rangle$. Thus,
\begin{eqnarray}
\begin{split}\label{eq:B1}
\frac{1}{\sqrt{k}} &\le \max_j \beta_j^{(i-1)} \le 1, \\
\frac{1}{\kappa^{i-1}\sqrt{k}} &\le \min_j \beta_j^{(i-1)} \le \frac{1}{\sqrt{k}}.
\end{split}
\end{eqnarray}
Since $\rho \frac{(\sigma_j \beta_j^{(i-1)})^2+\lambda_2}{ (\sigma_j {\beta_j^{(i-1)}})^2} \le 1$, $\rho \le \min_j \frac{ (\sigma_j {\beta_j^{(i-1)}})^2}{(\sigma_j \beta_j^{(i-1)})^2+\lambda_2}$. Note that
\begin{eqnarray}
\begin{split}
&\min_j \frac{ (\sigma_j {\beta_j^{(i-1)}})^2}{(\sigma_j \beta_j^{(i-1)})^2+\lambda_2}
= \min_j \frac{\beta_j^{(i-1)}}{\beta_j^{(i-1)}+\frac{\lambda_2}{\sigma_j^2 \beta_j^{(i-1)}}} \\
&\ge \frac{\min_j \beta_j^{(i-1)}}{\max_j \left(\beta_j^{(i-1)}+ \frac{\lambda_2} {\sigma_j^2 \beta_j^{(i-1)}} \right)} \\
& = \frac{\min_j \beta_j^{(i-1)} \max_j \beta_j^{(i-1)}}{\max_j \left(\beta_j^{(i-1)}\right)^2 + \lambda_2 \kappa^2}
 \ge \frac{1/(\kappa^{(i-1)}k)}{1+\lambda_2\kappa^2} \\
& = \frac{1}{\kappa^{(i-1)}k+\lambda_2 k \kappa^2 \kappa^{(i-1)}}
> \frac{1}{2 \lambda_2 k \kappa^2 \kappa^{(i-1)}},
\end{split}
\end{eqnarray}
for the $\kappa^{(i)}=O(\kappa^2)$, we could choose the parameter $\rho=O(\frac{1}{2\lambda_2 k \kappa^4})$. Then the probability to seeing 1 in register D in step 5 is
\begin{eqnarray}
\begin{split}
p(1) &=\rho^2 \sum_j \left( \frac{(\sigma_j \beta_j^{(i-1)})^2+ \lambda_2 }{\sigma_j^2 \beta_j^{(i-1)}} \right)^2 \\
&\le \rho^2 k \left( \frac{(\frac{1}{\kappa} \max_j \beta_j^{(i-1)})^2+ \lambda_2 }{\frac{1}{\kappa^2} \max_j \beta_j^{(i-1)}} \right)^2 \\
&\le \rho^2 k \left(\frac {1+\lambda_2\kappa^2}{\max_j \beta_j^{(i-1)}} \right)^2 \\
&\le \rho^2 k^2 (1+\lambda_2 \kappa^2)^2 \\
&\le (\frac{1}{2\lambda_2 k \kappa^4})^2 k^2 (1+\lambda_2 \kappa^2)^2 \\
&= O(\frac{1}{\kappa^4}).
\end{split}
\end{eqnarray}

\begin{figure*}
\centering
\begin{minipage}[c]{0.4\textwidth}
\centering
\includegraphics[height=5.5cm,width=7.5cm]{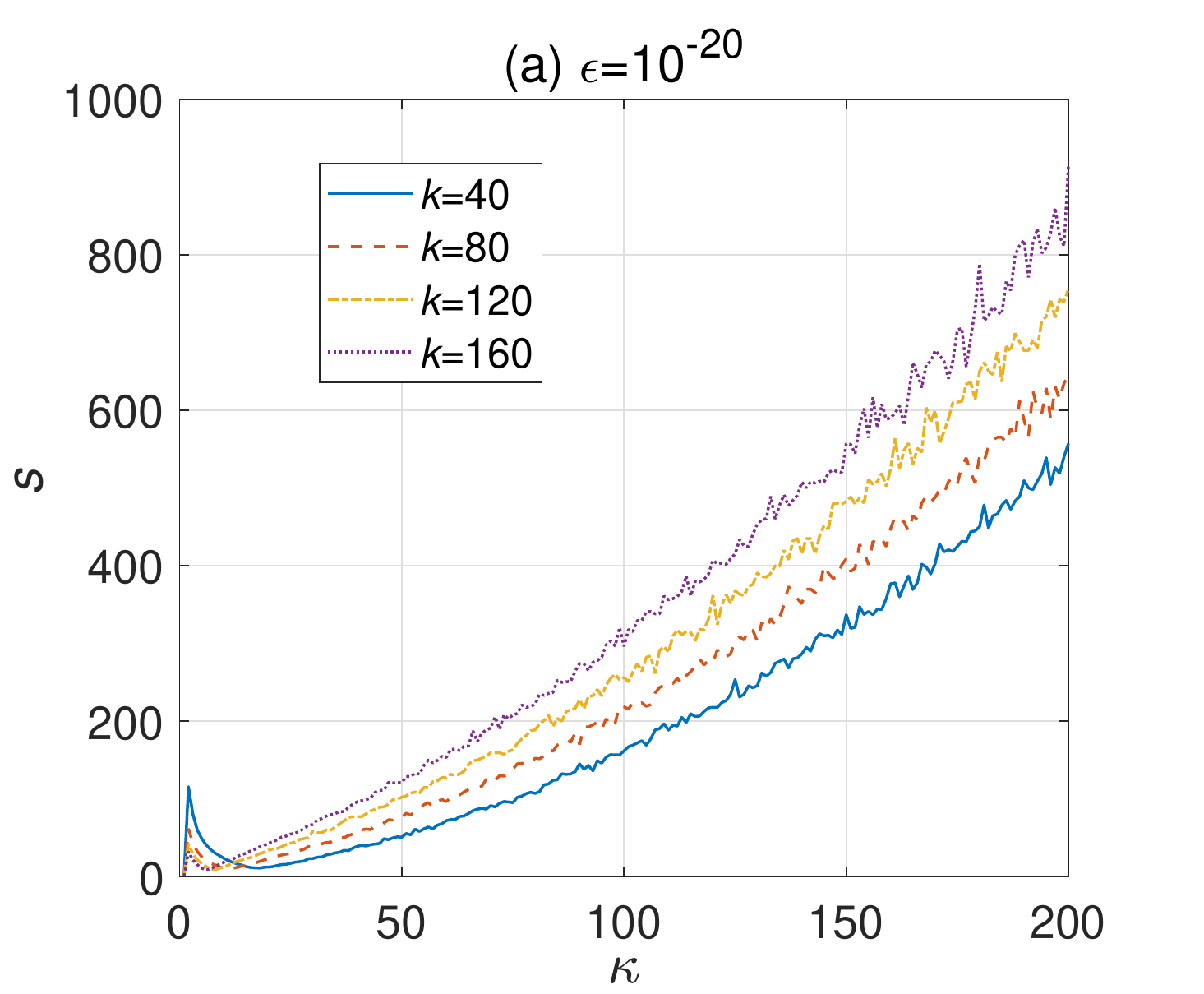}
\end{minipage}%
\begin{minipage}[c]{0.4\textwidth}
\centering
\includegraphics[height=5.5cm,width=7.5cm]{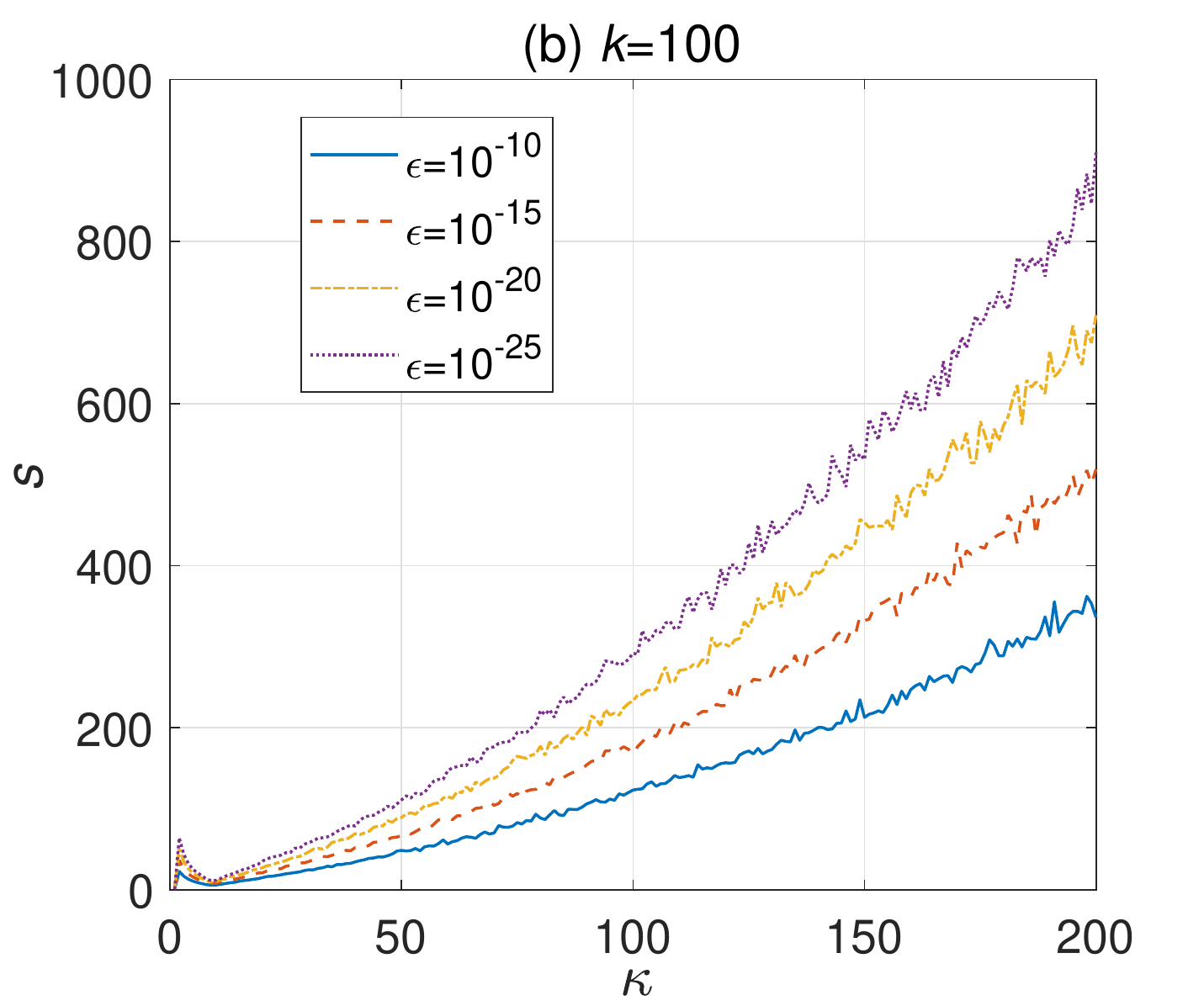}
\end{minipage}
\caption{The relationship of the condition number $\kappa$ and the number of iterations $s$ when
$\epsilon=10^{-20}$ (a) and $k=100$ (b) respectively.}
\label{fig:kappa}
\end{figure*}
\section{Estimate the parameter $c^{(i)}$ and analyze the relative error of $c^{(i)}$ and $\beta_j^{(i)}$}
\label{app:D}
In this appendix, we analyze the value of parameter $c^{(i)}$ and $\theta$ first, and then analyze the relative error of $c^{(i)}$ and $\beta_j^{(i)}$ of our algorithm.

Note that
\begin{eqnarray}
\begin{split}
c^{(i)}\beta_j^{(i)} = \frac{(\sigma_j \beta_j^{(i-1)})^2+ \lambda_2}{\sigma_j^2 \beta_j^{(i-1)}}=\beta_j^{(i-1)} + \frac{\lambda_2}{\sigma_j^2 \beta_j^{(i-1)}}.
\end{split}
\end{eqnarray}
According to \emph{Theorem} \ref{the:2}, we have
\begin{eqnarray}
\begin{split}
c^{(i)}\beta_j^{(i)}
& \ge \min_j c^{(i)} \beta_j^{(i)} \\
& =\min_j \left(\beta_j^{(i-1)} +\frac{\lambda_2} {\sigma_j^2 \beta_j^{(i-1)}}\right) \\
& = \min_j\beta_j^{(i-1)} +\frac{\lambda_2} { \min_j\beta_j^{(i-1)}}.
\end{split}
\end{eqnarray}
Similarly, we have $c^{(i)}\beta_j^{(i)} \le \max_j\beta_j^{(i-1)} +\frac{\lambda_2 \kappa^2} { \max_j\beta_j^{(i-1)}}$.
Because $c > c^{(i)}\beta_j^{(i)}$, we can choose $c \ge \max_j\beta_j^{(i-1)} +\frac{\lambda_2 \kappa^2} { \max_j\beta_j^{(i-1)}}$. Since
\begin{eqnarray}
\max_j\beta_j^{(i-1)} +\frac{\lambda_2 \kappa^2} { \max_j\beta_j^{(i-1)}} \le 1+\lambda_2 \sqrt{k} \kappa^2,
\end{eqnarray}
we can take $c=1+\lambda_2 \sqrt{k} \kappa^2 = O(\sqrt{k}\kappa^2)$.

Based on equation (\ref{eq:B1}),
\begin{eqnarray}
\begin{split}
(c^{(i)})^2 &= (c^{(i)})^2 \sum_j(\beta_j^{(i)})^2 \\
&=\sum_j(\beta_j^{(i-1)} + \frac{\lambda_2}{\sigma_j^2 \beta_j^{(i-1)}})^2 \\
&=\sum_j\left((\beta_j^{(i-1)})^2 + \frac {2 \lambda_2} {\sigma_j^2} + (\frac {\lambda_2} {\sigma_j^2 \beta_j^{(i-1)}})^2 \right)\\
& \ge 1+2k\lambda_2+\lambda_2^2 k^2 =\Omega(k^2),
\end{split}
\end{eqnarray}

thus $c^{(i)}=\Omega(k)$. Then according to Eq (\ref{eq:21}),
\begin{eqnarray}
\begin{split}
\sin(\theta)= \frac{c^{(i)}}{c\sqrt{k}} = \Omega(\frac{k}{k\kappa^2})=\Omega(\frac{1}{\kappa^2}).
\end{split}
\end{eqnarray}

\begin{figure*}[htb]
\centering
\includegraphics[height=8.6cm, width=1\textwidth]{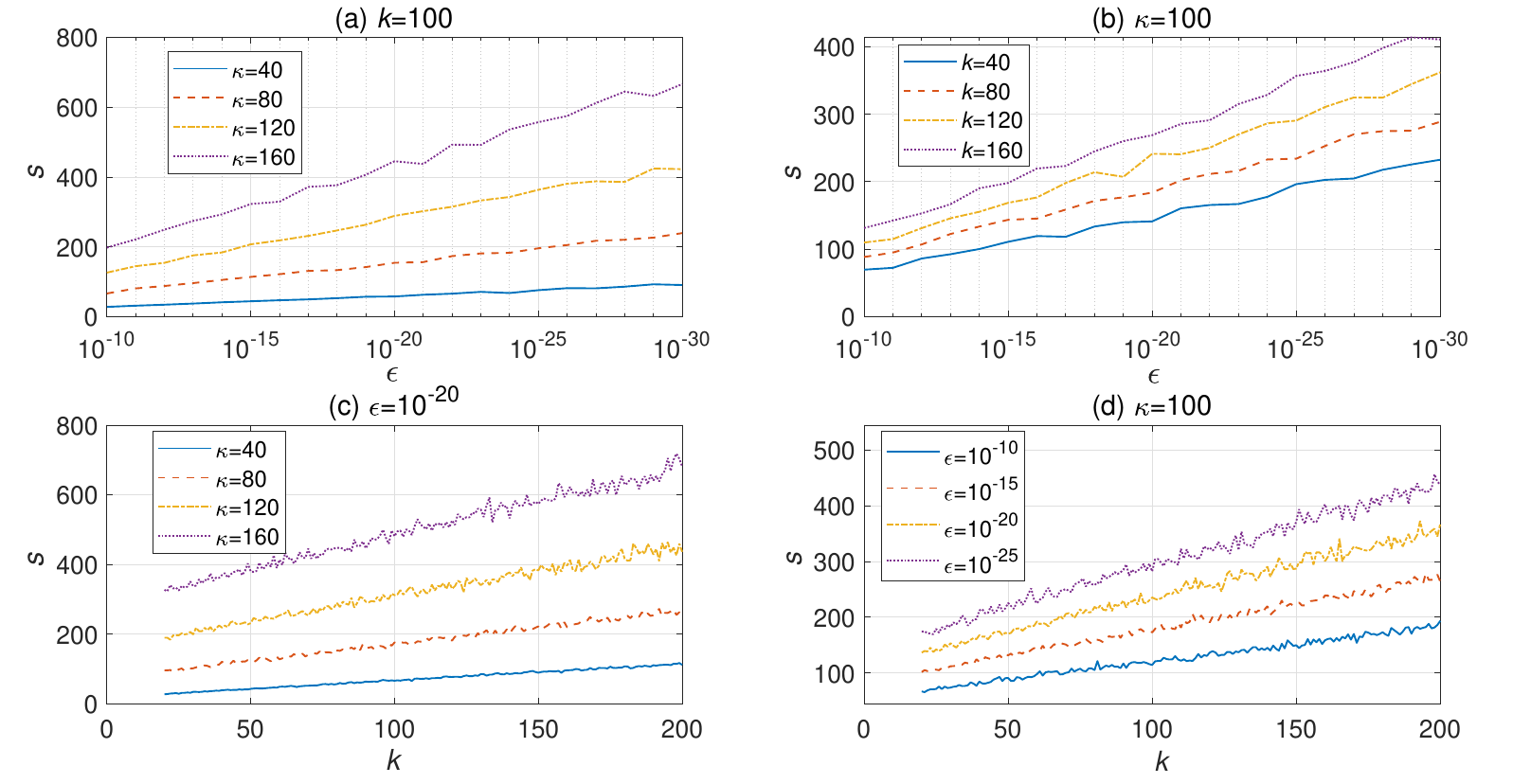}
\caption{Evolution of the number of iterations $s$ respect to the parameter $\epsilon$ ((a) and (b)) and $k$ ((c) and (d)).}
\label{fig:err_k}
\end{figure*}

The relative error of $c^{(i)}$ is
\begin{eqnarray}
\begin{split}
\tilde{\epsilon}_{c} &= \left|\frac{c^{(i)}-{\tilde{c}^{(i)}}}{c^{(i)}}\right|
= \left|\frac{c\sqrt{k}\sin(\theta)-{c\sqrt{k}\sin(\tilde{\theta}})}{c\sqrt{k}\sin(\theta)}\right| \\
&\le \left|\frac{\Delta\theta}{\sin(\theta)}\right| = O(\kappa^2\epsilon_2).
\end{split}
\end{eqnarray}
Thus the relative error of $\beta_j^{(i)}$ of step 4 is
    \begin{eqnarray}
    \begin{split}
        &\left|\frac{\beta_j^{(i)}-{\tilde{\beta}_j^{(i)}}}{\beta_j^{(i)}}\right|
          = \left|\frac{\frac{(\sigma_j \beta_j^{(i-1)})^2+ \lambda_2}{c^{(i)}\sigma_j^2
          \beta_j^{(i-1)}}-\frac{(\widetilde{\sigma}_j \widetilde{\beta}_j^{(i-1)})^2+
          \lambda_2}{\widetilde{c}^{(i)}\widetilde{\sigma}_j^2 \widetilde{\beta}_j^{(i-1)}}}{\frac{(\sigma_j \beta_j^{(i-1)})^2+ \lambda_2}{c^{(i)}\sigma_j^2 \beta_j^{(i-1)}}} \right| \\
         &= O(\left|\frac{\sigma_j^2\lambda_2 + 2\sigma_j \beta_j^{(i-1)} \lambda_2 - \sigma_j^4 (\beta_j^{(i-1)})^2}{\sigma_j^4 (\beta_j^{(i-1)})^3 + \sigma_j (\beta_j^{(i-1)})\lambda_2}\right|\epsilon_1+\tilde{\epsilon}_c) \\
         &= O(\frac{\sigma_j\lambda_2 + 2\beta_j^{(i-1)} \lambda_2}{\sigma_j (\beta_j^{(i-1)})\lambda_2}\epsilon_1+\tilde{\epsilon}_c) \\
         &= O(\kappa^2\sqrt{k}\epsilon_1+\tilde{\epsilon}_c),
    \end{split}
    \end{eqnarray}
where we neglect the quadratic terms of $\epsilon_1$ in the second equation, $\epsilon_1$ is the absolute error of $\beta_j$ and $\sigma_j$.

\section{The number of iterations of the reformulated AOP algorithm}
\label{app:iteration}
In this appendix, we evaluate the number of iterations $s$ of the reformulated AOP algorithm through numerical experiments on randomly generated datasets respect to three parameters, i.e., the condition number $\kappa$ of $\widetilde{X}$,  the precision $\epsilon$, and the number of the principal components $k$ of $\widetilde{X}$ (or the dimensionality of the reduced feature space). By analyzing the steps of the algorithm, we find that the parameters $n$ and $m$ (the dimensionality of $\widetilde{X}$) won't influence the number of iteration directly.

We evaluate the relationship of $s$ and $\kappa$ first. Note that the first step of the reformulated AOP algorithm is computing the PCA of the data matrix $\widetilde{X}$ and then selecting the $k$ largest eigenvalues (i.e. the $\sigma_0^2, \sigma_1^2,...,\sigma_{k-1}^2$) to do the later steps. To reduce the running time, We randomly generate $k$ positive numbers as the $k$ largest eigenvalues of $\widetilde{X}$ to remove the process of PCA. The experimental results is shown in FIG. \ref{fig:kappa}. We set $\epsilon=10^{-20}$ and $k=\{40,80,120,160\}$ in FIG. \ref{fig:kappa}(a), and set $k=100$ and $\epsilon=\{10^{-10}, 10^{-15}, 10^{-20}, 10^{-25}\}$ in FIG. \ref{fig:kappa}(b). The experimental results show that $s$ may have a superlinear dependence on $\kappa$ when $\kappa \ge 20$. Note that the exponential speedups of the two quantum algorithms are based on $\kappa=O(\mathrm{polylog}(mn))$ and the quantum algorithms have advantages when $n$ and $m$ are large. Thus the situation that the value of $\kappa$ is large, i.e. $\kappa \ge 20$, deserves more attention.

The relationship of $\epsilon$ and $s$ is shown in FIG. \ref{fig:err_k}(a) and (b), and the relationship of $k$ and $s$ is shown in FIG. \ref{fig:err_k}(c) and (d). The experimental results in FIG. \ref{fig:err_k} show that $s$ may be linearly dependent on $\log_2(1/\epsilon)$ and $k$. Put all the three parameters together, $s=\Omega(\kappa+\log_2(1/\epsilon)+k)$ may hold.
\\
\\
\bibliography{refe}

\end{document}